\newtheorem{theorem}{Theorem}[section]
\newtheorem{lemma}[theorem]{Lemma}
\newtheorem{remark}[theorem]{Remark}
\newtheorem{example}[theorem]{Example}
\numberwithin{equation}{section}
\def\P{{\mathsf P}}
\def\H{\mathscr H}
\def\K{\mathscr K}
\def\D{\mathscr D}
\def\V{\mathsf V}
\def\R{\mathscr R}
\def\G{\mathscr G}
\def\W{\mathsf W}
\def\Dsf{\mathsf D}
\def\Gammasf{\mathsf\Gamma}
\def\ve{\mathsf v}
\def\y{\mathsf y}
\def\x{\mathsf x}
\def\Y{\mathsf Y}
\def\uno{\mathsf 1}
\def\E{\mathsf {E}}
\def\RE{\mathbb R}
\def\CO{{\mathbb C}}
\def\ccdot{\!\cdot\!}
\def\min{\text{\rm min}}
\def\max{\text{\rm max}}
\def\Omegasf{\mathsf\Omega}
\journal{Journal of Functional Analysis}
\begin{document}
\begin{frontmatter}
\author{Andrea Posilicano}
\ead{posilicano@uninsubria.it}
\address{DiSAT - Sezione di Matematica, Universit\`a dell'Insubria, Como, Italy}
\title{On the many Dirichlet Laplacians on a non-convex polygon 
and their approximations by point interactions}

\begin{abstract}
By Birman and Skvortsov it is known that if $\Omegasf$ is a 
planar
curvilinear polygon with $n$ non-convex 
corners then the Laplace operator 
with domain $H^2(\Omegasf)\cap H^1_0(\Omegasf)$ is a closed symmetric
operator with deficiency indices
$(n,n)$. 
Here we provide a Kre\u\i n-type resolvent formula for any self-adjoint extensions of 
such an  operator, i.e. for the set of self-adjoint non-Friedrichs Dirichlet Laplacians on $\Omegasf$, and show that any element in this set is the norm resolvent limit of a suitable sequence of Friedrichs-Dirichlet Laplacians with $n$ point interactions.
\end{abstract}

\begin{keyword} 
Dirichlet Laplacians\sep Point Interactions 
\sep Self-Adjoint Extensions \sep Kre\u \i n's Resolvent Formula
\MSC 47B25 \sep 35J05 \sep 35J10 \sep 81Q10

\end{keyword}
\end{frontmatter}

\begin{section}{Introduction.}
Since their rigorous mathematical definition by Berezin and Faddeev
\cite{[BF]} as self-adjoint extensions of the Laplacian restricted to 
smooth functions with compact support disjoint from a finite set 
 in 
$\RE^d$, $d\le 3$, point perturbations of the Laplacian have
attracted a lot of attention and have been used in a wide 
range of applications, as 
the huge list of references provided in \cite{[AGHH]} shows. Successively point perturbations of the Dirichlet Laplacian on a bounded 
domain have been defined in a similar way, see \cite{[CS]}, \cite{[BFM]}, \cite{[EM]}. In this case, since functions in the
domain of the Dirichlet Laplacian vanish at the boundary, 
points perturbations can not be placed there. Nevertheless one 
could try to
put point-like perturbations at the boundary by moving the 
 points supporting the  perturbation towards the 
boundary while
increasing the interactions strengths,
so to compensate the vanishing of the functions. 
However it is not clear how to implement this procedure, 
because there is no universal behavior for the 
functions in the operator domain 
in a neighborhood of the boundary. 
For example if $\Omegasf\subset\RE^2$ is 
a planar bounded domain which either 
has a regular  (i.e without corners) boundary  or is convex, then the self-adjoint
Friedrichs-Dirichlet Laplacian on $L^2(\Omegasf)$ has domain 
$H^2(\Omegasf)\cap H^1_0(\Omegasf)$. Here $H^k(\Omegasf)$ denotes the usual Hilbert-Sobolev space of $k$-th order and the subscript means ``zero at the boundary''. Thus, by the (dense) 
inclusions
$C^\infty_0(\Omegasf)\subset H^2(\Omegasf)\cap H^1_0(\Omegasf)
\subset C_0(\Omegasf)$, 
there is no minimal vanishing rate for $u(\x)$ as $\x$ 
approaches the boundary. The situation changes if
one considers a planar non-convex polygon. Indeed in this case 
the Friedrichs-Dirichlet
Laplacian has a domain that is strictly larger than $H^2(\Omegasf)\cap
H^1_0(\Omegasf)$ and for any function $u$ in such a domain one has 
$u(\x)\sim \xi_u\, s_\ve(\x)\|\x-\ve\|^{\pi/\omega}$ when $\|\x-\ve\|\ll 1$, where
$\ve$ is the vertex at a non-convex
corner, $\omega>\pi$ is the measure of the interior angle at $\ve$,  and $0<s_\ve(\x)\le 1$. This 
indicates that it should be possible
to renormalize the value of $u$
at $\ve$ by considering the limit  of 
$\|\x-\ve\|^{-\frac{\pi}{\omega}}s_\ve(\x)^{-1}u(\x)$ as $\x\to\ve$. 
Indeed such a procedure works and in the case of an arbitrary
point perturbation of the Friedrichs-Dirichlet Laplacian on a planar
polygon $\Omegasf$ with $n$ non-convex corners,  
the limit operator, 
as the $n$ points supporting the perturbations converge to
the $n$ non-convex vertices, turns out to be a well defined 
self-adjoint
operator: it coincides with a self-adjoint extension 
of the closed symmetric operator 
(which by \cite{[BS]} 
has deficiency indices $(n,n)$ ) given by 
the Laplace operator on $H^2(\Omegasf)\cap H^1_0(\Omegasf)$. \par
The proof we give in this paper follows the reverse
path. \par At first in Section 2 we provide a Kre\u\i n's resolvent formula for any self-adjoint extensions of  the Laplace operator on 
$H^2(\Omegasf)\cap
H^1_0(\Omegasf)$, $\Omegasf$ a bounded non-convex curvilinear polygon (unknown to the author, some similar results had been 
given in last section of the 
unpublished paper \cite{[DM]}; we thank Mark Malamud for 
the communication). Here we work in a operatorial setting; however we profit by some known results obtained by a more PDE-oriented approach, the literature on the subject being abundant: see e.g. \cite{[Ko]}, \cite{[MMPS]}, \cite{[G2]}, \cite{[Da]}, \cite{[G]}, \cite{[Ni]}, \cite{[NP]}, \cite{[KMR]}, \cite{[BK]} and references therein.  
The operator domain of any of the self-adjoint extensions is contained in the kernel of the unique continuous extension to the domain of the maximal Laplacian of the trace (evaluation) operator along the boundary, and the functions in the operator domains still satisfy Dirichlet's boundary condition $u(\x)=0$, provided
$\x$ is not the vertex of a non-convex corner. Thus such
family of self-adjoint extensions defines a set of non-Friedrichs
Dirichlet Laplacians, the Friedrichs-Dirichlet Laplacian being the
only one satisfying Dirichlet's boundary conditions also at the
vertices of the non-convex corners. 
\par 
In Section 3 we define an arbitrary $n$-point perturbation of
the Friedrichs-Dirichlet Laplacian $\Delta^F_\Omegasf$ by considering all self-adjoint extensions of the symmetric operator given by the restriction of $\Delta^F_\Omegasf$ to the set of function vanishing at $n$ points contained in $\Omega$. Then we provide a corresponding Kre\u\i n's resolvent formula (see \cite{[BFM]} and \cite{[EM]} for similar results) and we show that, if the points supporting  the perturbations converge to the
non-convex vertices of $\Omegasf$, while the coupling strength is renormalized
according to the vanishing rate of the functions in  $\D(\Delta^F_\Omegasf)$, 
these self-adjoint operators converge in norm resolvent sense
to the self-adjoint
extensions provided in Section 2 (see Theorem \ref{convergence}).\par In the Appendix, we
collect, following the approach developed in \cite{[Po1]}-\cite{[Po4]}, some results about self-adjoint extensions of
symmetric operators that we need in the proofs. In particular we give a simple convergence criterion for sequences of extensions (see Lemma \ref{taun}). 

\begin{subsection}{Notations.}{\ }
\par\noindent
$\bullet$ $\D(L)$, $\K(L)$, $\R(L)$, $\rho(L)$ denote
the domain, kernel,
range and resolvent set of a  closed linear operator $L$  
on an Hilbert space $\H$; 
\par\noindent
$\bullet$ $\|\phi\|_L=(\|L\phi\|^2_\H+\|\phi\|^2_\H)^{1/2}$ 
denotes the graph norm on $\D(L)$;
\par\noindent
$\bullet$
$L|{\mathcal V}$ denotes the restriction of $L$ to 
${\mathcal V}\subset\D(L)$;\par\noindent
$\bullet$ $L^2(\Omegasf)$ denotes the Hilbert space of
square-integrable functions on the open domain 
$\Omegasf$ with scalar product $\langle
u,v\rangle_{L^2(\Omegasf)}=
\int_\Omegasf 
\bar u(\x)v(\x)\,d\x$;
\par\noindent
$\bullet$ The dot $\cdot$ denotes the scalar product on $\CO^n$, 
i.e. $\xi\ccdot\zeta=\sum_{k=1}^{n}\bar\xi_k\zeta_k$;
\par\noindent
$\bullet$ $\CO^n_\Pi\equiv\R(\Pi)$ denotes the subspace corresponding
to the orthogonal projector $\Pi:\CO^n\to\CO^n$. By a slight abuse of notation
we use the same symbol $\Pi$ also to denote the 
injection $\Pi|\CO_\Pi^n:\CO_\Pi^n\to\CO^n $ and the surjection 
$(\Pi|\CO_\Pi^n)^*: \CO^n\to\CO^n_\Pi$;
\par\noindent
$\bullet$  $\E(\CO^n)$ denotes the bundle $p:\E(\CO^n)\to \P(\CO^n)$, where 
$\P(\CO^n)$ is the set of orthogonal projectors on $\CO^n$ and $p^{-1}(\Pi)$ is the set of symmetric operators on $\CO^n_\Pi$;
\par\noindent
$\bullet$  $c$ denotes a
generic strictly positive constant which can change from line to line.
\end{subsection}
\end{section}
\begin{section}{Dirichlet Laplacians on a non-convex planar curvilinear polygon.}
Let $\mathsf\Omegasf\subset\RE^2$ be
a bounded open Lipschitz domain.
This means that in the neighborhood of any of its point $\Omegasf$ is
below the graph of a Lipschitz function and such a graph coincides with its boundary $\Gammasf$. \par
We denote by $\Delta_\Omegasf$ the distributional
Laplace operator on $\Omegasf$ and we define
$$
\Delta_\Omegasf^\max:\D(\Delta_\Omegasf^\max)\subset
L^2(\Omegasf)\to L^2(\Omegasf)\,,\qquad \Delta_\Omegasf^\max u:=
\Delta_\Omegasf u\,,
$$
where
$$
\D(\Delta_\Omegasf^\max):=
\{u\in L^2(\Omegasf)\,:\, \Delta_\Omegasf u\in L^2(\Omegasf)\}\,.
$$
We denote by $C^{\infty}(\bar\Omegasf)$ the set of functions on
$\bar\Omegasf$, the closure of $\Omegasf$, which are restriction to
$\bar\Omegasf$ of smooth functions with compact support
on $\RE^2$ and we denote by $H^k(\Omegasf)$
the Sobolev-Hilbert
space given by closure of
$C^{\infty}(\bar\Omegasf)$ with respect to the norm defined by
$$
\|u\|_{H^k(\Omegasf)}^2=\sum_{0\le \alpha_1+\alpha_2\le k}
\left\| \partial_1^{\alpha_1}\partial_2^{\alpha_2} 
u\right\|^2_{L^2(\Omegasf)}
\,.
$$
By Sobolev embedding theorem one has, for any $\alpha\in (0,1)$,
\begin{equation}\label{emb}
H^2(\Omegasf)\subseteq C^{\alpha}(\bar\Omegasf)
\end{equation}
and
\begin{equation}\label{emb2}
\forall u\in H^2(\Omegasf)\,,\ \forall\x,\y\in\bar\Omegasf\,,\quad |u(\x)-u(\y)|\le c\,\|u\|_{H^2(\Omegasf)}\|\x-\y\|^\alpha\,.
\end{equation}
Analogously $H^k_0(\Omegasf)$ denotes
the closure of $C_c^{\infty}(\Omegasf)$, the set of smooth function with
compact support on $\Omegasf$, with respect to the
same norm.
The space $H^1_0(\Omegasf)$ can be equivalently defined by
$$
H^1_0(\Omegasf):=\{u\in H^1(\Omegasf)\,:\, \gamma_0 u=0\}\,,
$$
where  
$$
\gamma_0:H^1(\Omegasf)\to
L^{2}(\Gammasf)
$$
is the unique continuous 
linear map such that
$$
\forall u\in C^\infty(\bar\Omegasf)\,,\quad\forall\,
\x\in\Gammasf \,,\qquad
 \gamma_0 u\,(\x)=u\,(\x)\,.
$$
There is a standard, well known way to define a self-adjoint Dirichlet
Laplacian on  $L^2(\Omegasf)$
:
since the symmetric sesquilinear form
$$
F_\Omegasf:H^1_0(\Omegasf)\oplus H^1_0(\Omegasf)\subset L^2(\Omegasf)\oplus
L^2(\Omegasf)\to\CO\,,\quad
F_\Omegasf(u,v):=
\langle\nabla u,\nabla v\rangle_{L^2(\Omegasf)}
$$
is closed and positive, by Friedrichs' extension theorem
there exists an unique self-adjoint operator
$$
\Delta_\Omegasf^F:\D(\Delta_\Omegasf^\max)\cap H^1_0(\Omegasf)
\subset L^2(\Omegasf)\to L^2(\Omegasf)\,,\qquad
\Delta_\Omegasf^Fu=\Delta_\Omegasf u\,,
$$
such that
$$
\forall u\in \D(\Delta_\Omegasf^\max)\cap H^1_0(\Omegasf)\,,
\quad\forall v\in H^1_0(\Omegasf)\,,\qquad F_\Omegasf(u,v)
=-\langle \Delta^F_\Omegasf u,v\rangle_{L^2(\Omegasf)}\,.
$$
Moreover
$$\D(\Delta^F_\Omegasf)\equiv\D(\Delta_\Omegasf^\max)\cap
H^1_0(\Omegasf)$$
is dense in $H^1_0(\Omegasf)$,
$0\in \rho
(\Delta^F_\Omegasf)$, $-\Delta^F_\Omegasf$ has a compact
resolvent, and its spectrum consists of
an infinite sequence
$$\lambda_1(\Omegasf)<\lambda_2(\Omegasf)\le\lambda_3(\Omegasf)\le
\dots$$  of strictly positive
eigenvalues each having finite multiplicity,
$\lambda_1(\Omegasf)$ being simple. We call $\Delta^F_\Omegasf$ the
{\it Friedrichs-Dirichlet Laplacian}.
\par
In the case $\Omegasf$ is piecewise regular there is another way to produce a 
self-adjoint Dirichlet
Laplacian on  $L^2(\Omegasf)$. Thus from now on we suppose that
$\Omegasf\subset \RE^2$ is a planar bounded open curvilinear polygon
(cups points are not allowed) 
which coincides with a planar polygon in the neighborhood of any
(eventual) non-convex corner. 
\par
Let us  recall the following Caccioppoli-type regularity estimate:
\begin{equation}\label{cacc}
\forall u\in H^2(\Omegasf)\cap H^1_0(\Omegasf)\,,\qquad \|u\|_{H^2(\Omegasf)}\le c\, \|\Delta_\Omegasf u\|_{L^2(\Omegasf)}\,.
\end{equation}
The proof of such an estimate,
for general elliptic second order
differential operator
on a class of bounded
open sets which includes curvilinear polygons,
can be found in \cite{[LU]} (see Lemma 8.1, Chapter 3, Section 8); for the Laplace operator on polygons a much simpler
proof is given in \cite{[G]}, Theorem 2.2.3.\par
By \eqref{cacc}, since $H^2(\Omegasf)\cap H^1_0(\Omegasf)$ is closed in $H^2(\Omegasf)$,
(see e.g. \cite{[G]}, Theorem 1.6.2), the linear operator
$$
\Delta_\Omegasf^\circ:H^2(\Omegasf)\cap H^1_0(\Omegasf)\subset
L^2(\Omegasf)\to L^2(\Omegasf)\,,\qquad \Delta_\Omegasf^\circ u:=
\Delta_\Omegasf u
$$
is closed. Moreover, by Green's formula for curvilinear polygons (see \cite {[G2]}, Lemma 1.5.3.3), $\Delta^\circ_\Omegasf$ is symmetric. Thus a natural question arises: is $\Delta^\circ_\Omegasf$
self-adjoint? Equivalently:
does $\Delta^\circ_\Omegasf$ coincide with $\Delta^F_\Omegasf$ ?
If $\Omegasf$ had a regular
boundary with no corners then
$\D(\Delta_\Omegasf^\max)\cap H_0^1(\Omegasf)=H^2(\Omegasf)\cap
H^1_0(\Omegasf)$,
i.e. $\Delta^\circ_\Omegasf=\Delta^F_\Omegasf$. Otherwise
the answer depends on the shape of $\Omegasf$.
Indeed if $\Omegasf$ is a curvilinear polygon
then, as it has been proven in \cite {[BS]}, the deficiency indices of $\Delta^\circ_\Omegasf$
are both equal to $n$,
the number of non-convex corners of $\Omegasf$.
In this case $\D(\Delta_\Omegasf^\max)\cap H_0^1(\Omegasf)\not
=H^2(\Omegasf)\cap
H^1_0(\Omegasf)$
is an immediate consequence of the fact that the function
$$
u(r,\theta)=r^{\beta}\,\sin \beta\theta\,,\quad \beta:
=\frac{\pi}{\omega}\,,$$
belongs to $H^1(\W)$, where $\W$ is the
wedge $$\W=\{\x\equiv(r\cos\theta,r\sin\theta)\,:\, 0< r<1\,,\ 0<\theta<\omega\}\,,$$
is in $\D(\Delta_\W^\max)$ since $\Delta_\W u=0$, 
but fails to be in
$H^2(\W)$ when $\omega>\pi$.
\par
From now on we will suppose that $n>0$ so that
$$\D(\Delta^\circ_\Omegasf)\equiv 
H^2(\Omegasf)\cap H^1_0(\Omegasf)
\subsetneq \D(\Delta^{\max}_\Omegasf)\cap H^1_0(\Omegasf)\equiv 
\D(\Delta^F_\Omegasf)\,.$$ 
Since $\Delta^\max_\Omegasf$ is the
adjoint of the restriction of $\Delta_\Omegasf$ to
$C^\infty_c(\Omegasf)$, one has
$({\Delta^\circ_\Omegasf})^*\subset
\Delta^\max_\Omegasf$ and so any  self-adjoint
extension of $\Delta^\circ_\Omegasf$ acts on the
functions in its domain as the distributional
Laplacian.\par 
Let us at first characterize
$\K({(\Delta^\circ_\Omegasf)}^*)$.  To this end
we need the extension $\hat \gamma_0$ of 
$\gamma_0$ to $\D(\Delta_\Omegasf^\max)$ provided in \cite{[G2]}, Theorem
1.5.3.4, and  \cite{[G]},
Theorem 1.5.2:  there
exits an unique continuous
map
$$
\hat\gamma_0:
\D(\Delta_\Omegasf^\max)\to\oplus_{i=1}^m \tilde H^{-\frac12}
(\Gammasf_i)\,,
$$
which coincides with $\gamma_0$ on
$\D(\Delta_\Omegasf^\max)\cap
H^1(\Omegasf)$.
Here $\tilde
H^{-\frac12}(\Gammasf_i)$ denote  Hilbert spaces of distributions on the smooth curves $\Gammasf_i$, $i=1,\dots,m$, which union, together with their endpoints (i.e. the vertices of $\Omegasf$), give $\Gammasf$. We do not need here the precise definition of $\tilde
H^{-\frac12}(\Gammasf_i)$, see the quoted references for the details. For our purposes it suffices to say that 
$$
\hat\gamma_0 u=0\quad\iff\quad\forall i=1,\dots,m\,,\ 
\forall\varphi_i\in C_c^\infty(\Gammasf_i)\,,\quad \langle(\hat\gamma_0 u)_i,\varphi_i\rangle=0\,.
$$
Then, by Lemma 2.3.1 and
Theorem 2.3.3 in \cite{[G]}, one has the following
\begin{theorem}\label{kernel}
$$\K(({\Delta^\circ_\Omegasf})^*)
=\K(\Delta_\Omegasf^\max)\cap\K(\hat \gamma_0)\,.$$
\end{theorem}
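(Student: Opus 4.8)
The plan is to derive both inclusions from the generalised Green identity for the pair $(\Delta^\max_\Omegasf,\Delta^\circ_\Omegasf)$, the only non-elementary ingredient being a density property of the normal-derivative trace on $\D(\Delta^\circ_\Omegasf)$. Write $\gamma_1$ for the normal-derivative trace; for $v\in H^2(\Omegasf)$ its restriction to each smooth arc $\Gammasf_i$ lies in $H^{1/2}(\Gammasf_i)$, and when in addition $v\in H^1_0(\Omegasf)$ the element $\gamma_1 v$ belongs to the space $\mathscr T:=\big(\oplus_{i=1}^m\tilde H^{-1/2}(\Gammasf_i)\big)'$ that is in duality with the target of $\hat\gamma_0$; denote this pairing by $\langle\!\langle\cdot,\cdot\rangle\!\rangle$. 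Starting from the classical Green formula on $C^\infty(\bar\Omegasf)$,
$$
\langle u,\Delta_\Omegasf v\rangle_{L^2(\Omegasf)}-\langle\Delta_\Omegasf u,v\rangle_{L^2(\Omegasf)}=\langle\!\langle\gamma_0 u,\gamma_1 v\rangle\!\rangle-\langle\!\langle\gamma_1 u,\gamma_0 v\rangle\!\rangle\,,
$$
and restricting to $v\in H^2(\Omegasf)\cap H^1_0(\Omegasf)$, so that $\gamma_0 v=0$, the second boundary term drops out; the remaining term involves the Dirichlet trace of $u$ only, so by the continuity of $\hat\gamma_0$ on $\D(\Delta^\max_\Omegasf)$, of $\gamma_1$ on $H^2(\Omegasf)\cap H^1_0(\Omegasf)$, and the density of $C^\infty(\bar\Omegasf)$ therein, passing to the limit yields
$$
\langle u,\Delta^\circ_\Omegasf v\rangle_{L^2(\Omegasf)}-\langle\Delta^\max_\Omegasf u,v\rangle_{L^2(\Omegasf)}=\langle\!\langle\hat\gamma_0 u,\gamma_1 v\rangle\!\rangle
$$
for every $u\in\D(\Delta^\max_\Omegasf)$ and $v\in\D(\Delta^\circ_\Omegasf)=H^2(\Omegasf)\cap H^1_0(\Omegasf)$. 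The crucial feature is that no normal trace of the possibly rough function $u$ appears, precisely because it would be multiplied by $\gamma_0 v=0$.

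Granting in addition the density property
\begin{equation}\label{density-fact}
\overline{\{\gamma_1 v\,:\,v\in H^2(\Omegasf)\cap H^1_0(\Omegasf)\}}=\mathscr T\,,
\end{equation}
the theorem will follow at once. If $u\in\K(\Delta^\max_\Omegasf)\cap\K(\hat\gamma_0)$, then for every $v\in\D(\Delta^\circ_\Omegasf)$ the above identity gives $\langle u,\Delta^\circ_\Omegasf v\rangle_{L^2(\Omegasf)}=\langle\Delta^\max_\Omegasf u,v\rangle_{L^2(\Omegasf)}+\langle\!\langle\hat\gamma_0 u,\gamma_1 v\rangle\!\rangle=0$, so $u\in\D((\Delta^\circ_\Omegasf)^*)$ with $(\Delta^\circ_\Omegasf)^*u=0$, i.e.\ $u\in\K((\Delta^\circ_\Omegasf)^*)$. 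Conversely, if $u\in\K((\Delta^\circ_\Omegasf)^*)$, then, since $(\Delta^\circ_\Omegasf)^*\subset\Delta^\max_\Omegasf$ as already observed, $u\in\D(\Delta^\max_\Omegasf)$ and $\Delta^\max_\Omegasf u=(\Delta^\circ_\Omegasf)^*u=0$, hence $u\in\K(\Delta^\max_\Omegasf)$; inserting $\Delta^\max_\Omegasf u=0$ and $\langle u,\Delta^\circ_\Omegasf v\rangle_{L^2(\Omegasf)}=0$ into the identity leaves $\langle\!\langle\hat\gamma_0 u,\gamma_1 v\rangle\!\rangle=0$ for all $v\in\D(\Delta^\circ_\Omegasf)$, and \eqref{density-fact} forces $\hat\gamma_0 u=0$. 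Thus $u\in\K(\Delta^\max_\Omegasf)\cap\K(\hat\gamma_0)$, and the two inclusions give the claimed equality.

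The work is therefore concentrated in \eqref{density-fact}, which I expect to be the main obstacle. The idea is that away from the finitely many vertices the normal derivative along an arc can be prescribed freely while the Dirichlet trace is kept equal to zero: given $g_i\in C^\infty_c(\Gammasf_i)$ supported away from the endpoints, one builds in a collar of the interior of $\Gammasf_i$, with coordinates $(s,t)$ where $t=\mathrm{dist}(\cdot,\Gammasf_i)$, the function $v_i(s,t)=t\,g_i(s)\,\chi(t)$ with $\chi$ a smooth cutoff; then $v_i\in H^2(\Omegasf)$, $\gamma_0 v_i=0$, $\gamma_1 v_i=\pm g_i$ on $\Gammasf_i$, and $v_i$ vanishes identically near the other arcs and near every vertex, so $v_i\in H^1_0(\Omegasf)$. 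Summing these contributions over $i$ and using that such $g_i$'s are dense in $H^{1/2}(\Gammasf_i)$ (hence the corresponding tuples are dense in $\mathscr T$) yields \eqref{density-fact}. Turning this heuristic into a proof requires care with the non-smooth geometry — checking $v_i\in H^2(\Omegasf)$ and $v_i\in H^1_0(\Omegasf)$ near the corners, and identifying $\mathscr T$ together with the pairing $\langle\!\langle\cdot,\cdot\rangle\!\rangle$ — which is exactly the trace machinery for curvilinear polygons that one imports from \cite{[G]} (Lemma~2.3.1) and \cite{[G2]}.
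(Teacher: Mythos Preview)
The paper does not prove this statement itself; it merely cites Lemma~2.3.1 and Theorem~2.3.3 in \cite{[G]}. Your outline---the generalised Green identity for the pair $(\Delta^\max_\Omegasf,\Delta^\circ_\Omegasf)$ together with the density of the normal traces $\gamma_1\big(H^2(\Omegasf)\cap H^1_0(\Omegasf)\big)$ in the dual of $\oplus_i\tilde H^{-\frac12}(\Gammasf_i)$---is precisely the content of those cited results, so your approach and the paper's are the same proof at different levels of detail. One small point you gloss over: passing to the limit in $u$ in your derivation of the Green identity also requires the density of $C^\infty(\bar\Omegasf)$ in $\D(\Delta^\max_\Omegasf)$ with respect to the graph norm, not only density in $H^2(\Omegasf)\cap H^1_0(\Omegasf)$; but this too belongs to the Grisvard trace machinery (\cite{[G2]}, Section~1.5) that you already invoke at the end.
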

As we already said before, contrarily to the case of a domain $\Omegasf$ either convex or
with a regular boundary, the kernel of ${(\Delta^\circ_\Omegasf)}^*$
is not trivial. Indeed in \cite{[BS]} it is shown that 
\begin{equation}\label{number}
\text{\rm dim}\,\K(({\Delta^\circ_\Omegasf})^*)=
\text{\rm number of non-convex corners of $\Omegasf$}.
\end{equation}
In order to better characterize $\K(({\Delta^\circ_\Omegasf})^*)$ we
introduce some more definitions. 
Let $\V=\{\ve_1,\dots,\ve_n\}$ be the set of vertices at the non-convex
corners of $\Omegasf$ and, for any $\ve_k\in\V$, let 
$\omega_k>\pi$ 
denote the measure of the corresponding interior 
angle. We define the wedge
\begin{align*}
\W^R_k:=\Omegasf\cap \Dsf^R_k
\equiv
\{\x_k\equiv(r_k\cos \theta_k, r_k\sin\theta_k)\in\RE^2\,:\,
0< r_k<R\,,\ 0<\theta_k<\omega_k\}\,,
\end{align*}
where $\Dsf^R_k$ denotes the disk of radius $R$ centered at $\ve_k$; we choose $R$ small enough to have 
$\W^R_i\cap \W^R_k=\emptyset$, $i\not=k$.
On any disk $\Dsf_k$ centered at $\ve_k$ we consider the functions
$u_k^\pm\in \K(\Delta_{\Dsf_k}^\max)$ defined by
$$
u_k^\pm(r_k,\theta_k)=
\frac{1}{\sqrt\pi}\ r_k^{\pm\beta_k}\sin \beta_k\theta_k\,,\quad
\beta_k:=\frac{\pi}
{\omega_k}\,,
$$
and we take $f\in C^{1,1}(\RE_+)$, i.e. $f$ is differentiable with a Lipschitz derivative, such that $0\le f\le 1$,
$f(r)=1$ if $0<r\le R/3$ and $f(r)=0$
if $r\ge 2R/3$.
With such a choice we have
$fu_k^\pm\in L^2(\Omegasf)$ and, since supp$(fu_k^\pm)
\subset\W^R_k$,
the functions $fu_k^\pm$ are $L^2(\Omegasf)$-orthogonal and 
thus linearly independent.
\begin{lemma}\label{sigma} Let us define
$$
s_k:=f u^+_k\,,\quad \sigma_k:=f u^-_k\,, \quad
g_k:=
\sigma_k+(-{\Delta_\Omegasf^F})^{-1}\Delta_\Omegasf
\sigma_k\,.
$$
Then\par\noindent
1) $$s_k\in\D(\Delta_\Omegasf^F)\,,\qquad
\sigma_k\in\D((\Delta_\Omegasf^\circ)^*)\cap\K(\hat\gamma_0)\,;$$
2) $g_k$ is the unique
function in $\K(({\Delta_\Omegasf^\circ})^*)$ such that
$$g_k-\sigma_k
\in\D(\Delta_\Omegasf^F);$$
3) the  $g_k$'s are linearly independent;\par\noindent
4)$$
\langle
g_i,(-\Delta_\Omegasf^F) s_k\rangle_{L^2(\Omegasf)}=\delta_{ik}\,.
$$
\end{lemma}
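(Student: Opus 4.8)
The plan is to handle the four items in sequence, using throughout that every self-adjoint extension of $\Delta^\circ_\Omegasf$ — in particular $\Delta^F_\Omegasf$ — acts as the distributional Laplacian, together with the characterization $\K(({\Delta^\circ_\Omegasf})^*)=\K(\Delta_\Omegasf^\max)\cap\K(\hat\gamma_0)$ of Theorem \ref{kernel} and the explicit form of the singular modes $u_k^\pm$. For item 1), since $f$ is constant on $\{0<r_k\le R/3\}$ and on $\{r_k\ge 2R/3\}$ and $u_k^\pm\in\K(\Delta_{\Dsf_k}^\max)$ are harmonic away from $\ve_k$ with no distributional mass there, one gets, as identities in $L^2(\Omegasf)$,
\[
\Delta_\Omegasf s_k=2\nabla f\ccdot\nabla u_k^+ +(\Delta f)\,u_k^+\,,\qquad
\Delta_\Omegasf\sigma_k=2\nabla f\ccdot\nabla u_k^- +(\Delta f)\,u_k^-\,,
\]
both smooth and supported in the annulus $\{R/3\le r_k\le 2R/3\}$, so $s_k,\sigma_k\in\D(\Delta_\Omegasf^\max)$. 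Since $\beta_k>0$, $u_k^+\in H^1$ near $\ve_k$, hence $s_k\in H^1(\Omegasf)$; and $\sin\beta_k\theta_k$ vanishes on the two rays $\theta_k\in\{0,\omega_k\}$ bounding the wedge (recall $\beta_k\omega_k=\pi$), so $s_k$ vanishes on $\Gammasf$, giving $\gamma_0 s_k=0$ and $s_k\in\D(\Delta_\Omegasf^\max)\cap H^1_0(\Omegasf)=\D(\Delta_\Omegasf^F)$. The same boundary computation applied to the extended trace — using that $\hat\gamma_0$ is built arc by arc and agrees with the classical restriction wherever the function is regular up to the (interior of the) arc — yields $\hat\gamma_0\sigma_k=0$; that $\sigma_k\in\D(({\Delta^\circ_\Omegasf})^*)$ I would then read off from item 2), writing $\sigma_k=g_k-(-\Delta_\Omegasf^F)^{-1}\Delta_\Omegasf\sigma_k$ with the first summand in $\K(({\Delta^\circ_\Omegasf})^*)$ and the second in $\D(\Delta_\Omegasf^F)\subset\D(({\Delta^\circ_\Omegasf})^*)$.

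For item 2), set $w_k:=(-\Delta_\Omegasf^F)^{-1}\Delta_\Omegasf\sigma_k\in\D(\Delta_\Omegasf^F)$, so that $g_k=\sigma_k+w_k$. Since $\Delta_\Omegasf^F$ acts as $\Delta_\Omegasf$, one has $-\Delta_\Omegasf w_k=\Delta_\Omegasf\sigma_k$, hence $\Delta_\Omegasf g_k=0$, i.e. $g_k\in\K(\Delta_\Omegasf^\max)$; moreover $\hat\gamma_0 g_k=\hat\gamma_0\sigma_k+\gamma_0 w_k=0$ because $w_k\in H^1_0(\Omegasf)$ and $\hat\gamma_0$ restricts to $\gamma_0$ on $H^1(\Omegasf)$. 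By Theorem \ref{kernel}, $g_k\in\K(({\Delta^\circ_\Omegasf})^*)$, and $g_k-\sigma_k=w_k\in\D(\Delta_\Omegasf^F)$ by construction. For uniqueness, any $h\in\K(({\Delta^\circ_\Omegasf})^*)$ with $h-\sigma_k\in\D(\Delta_\Omegasf^F)$ satisfies $-\Delta_\Omegasf^F(h-\sigma_k)=-\Delta_\Omegasf h+\Delta_\Omegasf\sigma_k=\Delta_\Omegasf\sigma_k$, so $h-\sigma_k=w_k$ and $h=g_k$.

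Item 4) I would prove by transferring the Laplacian across the inner product: using $g_i=\sigma_i+w_i$, the self-adjointness of $\Delta_\Omegasf^F$, and $-\Delta_\Omegasf^F w_i=\Delta_\Omegasf\sigma_i$,
\[
\langle g_i,(-\Delta_\Omegasf^F)s_k\rangle_{L^2(\Omegasf)}
=\langle\sigma_i,-\Delta_\Omegasf s_k\rangle_{L^2(\Omegasf)}+\langle\Delta_\Omegasf\sigma_i,s_k\rangle_{L^2(\Omegasf)}
=\int_\Omegasf\big((\Delta\sigma_i)\,s_k-\sigma_i\,\Delta s_k\big)\,.
\]
If $i\neq k$ the integrand vanishes identically, $\Delta\sigma_i$ and $\Delta s_k$ being supported in the disjoint wedges $\W^R_i$ and $\W^R_k$. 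If $i=k$ the integrand is supported in $\{R/3\le r_k\le 2R/3\}$, so Green's second identity on $\Omegasf\setminus\overline{\Dsf^\varepsilon_k}$ applies for small $\varepsilon$: the boundary terms on $\Gammasf$ drop because $\sigma_k=s_k=0$ there, and on the circle $r_k=\varepsilon$ (where $f\equiv1$) a direct computation with $u_k^\pm=\frac{1}{\sqrt\pi}\,r_k^{\pm\beta_k}\sin\beta_k\theta_k$ gives $\int_0^{\omega_k}\frac{2\beta_k}{\pi}\sin^2\beta_k\theta_k\,d\theta_k=\frac{\beta_k\omega_k}{\pi}=1$, independently of $\varepsilon$; hence the pairing equals $\delta_{ik}$. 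Item 3) is then immediate: a relation $\sum_k c_k g_k=0$, paired against $(-\Delta_\Omegasf^F)s_i$, forces $c_i=0$. (Alternatively, item 3) follows directly from the fact that the $\sigma_k$ have pairwise disjoint supports while $u_k^-\notin H^1$ near $\ve_k$, so no nontrivial $\sum_k c_k\sigma_k$, being equal to $-\sum_k c_k w_k\in H^1_0(\Omegasf)$, can lie in $H^1(\Omegasf)$.)

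I expect the only genuinely delicate point to be the rigorous justification that $\hat\gamma_0\sigma_k=0$ — that Grisvard's extension of the Dirichlet trace to $\D(\Delta_\Omegasf^\max)$ is local enough to be evaluated arc by arc from the classically vanishing boundary values of $\sigma_k$ — and, in the same spirit, the bookkeeping ensuring that $\Delta_\Omegasf s_k$ and $\Delta_\Omegasf\sigma_k$ carry no part concentrated at the vertices; everything else is elementary once the explicit $u_k^\pm$ are used. The one computation that really uses the normalization $\beta_k=\pi/\omega_k$ is the arc integral in item 4), where $\beta_k\omega_k=\pi$ is precisely what makes $\int_0^{\omega_k}\frac{2\beta_k}{\pi}\sin^2\beta_k\theta_k\,d\theta_k$ come out equal to $1$.
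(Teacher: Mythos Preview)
Your argument is correct and covers all four items, but your route to item 4) differs from the paper's. The paper, after reducing as you do to
\[
\langle g_i,(-\Delta_\Omegasf^F)s_k\rangle_{L^2(\Omegasf)}
=\langle\Delta_\Omegasf\sigma_i,s_k\rangle_{L^2(\Omegasf)}-\langle\sigma_i,\Delta_\Omegasf s_k\rangle_{L^2(\Omegasf)},
\]
does \emph{not} puncture the domain and read off a boundary term on $r_k=\varepsilon$; instead it notes that both integrands are supported in the annulus $\W_k:=\W_k^{2R/3}\setminus\W_k^{R/3}$, writes out $\Delta_\Omegasf(fu_k^\pm)=f''u_k^\pm+(1\pm 2\beta_k)r^{-1}f'u_k^\pm$ explicitly there, and integrates directly in $(r,\theta)$, the $r$-integral collapsing to $\int(f^2)'\,dr=-1$. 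Your Green-identity computation on $\Omegasf\setminus\overline{\Dsf_k^\varepsilon}$ is equally valid and arguably cleaner conceptually, since it makes the $\varepsilon$-independence of the arc contribution manifest; the paper's computation has the advantage of never leaving the annulus, so no limiting argument near the vertex is needed at all. For item 3) the paper argues directly (your parenthetical alternative is essentially its proof) rather than deducing it from item 4) as you do first; either order is fine. Finally, the point you flag as delicate, namely $\hat\gamma_0\sigma_k=0$, is simply asserted in the paper without further comment, so your caution there goes beyond what the paper supplies.
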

\begin{proof} By $u_k^+\in C^\infty(\W^R_k)
\cap H^1(\W^R_k)$ one has $s_k\in \D(\Delta_\Omegasf^{\max})\cap H^1_0(\Omegasf)$ .
\par
By $\sigma_k\in\K(\hat\gamma_0)$ there follows 
$g_k\in\D(\Delta_\Omegasf^{\max})\cap \K(\hat\gamma_0)$. Hence $g_k\in\K((\Delta_\Omegasf^\circ)^*)$ by Theorem \ref{kernel}. This also shows that $\sigma_k\in\D((\Delta_\Omegasf^\circ)^*)$. Proof of point 2 is then completed by $\K(\Delta_\Omegasf^F)=\{0\}$. 
\par
Take $c_1,\dots,c_n$ such that
$\sum_{k=1}^{n}c_k g_k=0$. Then
$$
({\Delta_\Omegasf^F})^{-1}\Delta_\Omegasf\sum_{k=1}^{n}c_k
\sigma_k=\sum_{k=1}^{n}c_k \sigma_k\,.
$$
This gives $c_1=\dots=c_n=0$, since the
$\sigma_k\,$'s are linearly independent and do
not belong to $\D(\Delta_\Omegasf^F)$.
Thus point 3 is proven.
\par
As regards point 4, let us pose
$\W_k:=\W^{2R/3}_k\backslash \W^{R/3}_k$. Then
\begin{align*}
&\langle
g_k,\Delta_\Omegasf^F s_k\rangle_{L^2(\Omegasf)}=
\langle \sigma_k,\Delta_\Omegasf^F s_k\rangle_{L^2(\W_k)}
-\langle\Delta_\Omegasf \sigma_k,s_k\rangle_{L^2(\W_k)}
\\
=&\int_{\W_k}f u^-_k
\left(f''u_k^++\left(1+\frac{2\pi}{\omega_k}\right)
\frac{1}{r}\,f'u^+_k\right)\,dx
\\
&-\int_{W_k}\left(f''u_k^-+\left(1-\frac{2\pi}{\omega_k}\right)
\frac{1}{r}\,f'u^-_k\right)f u^+_k\,dx\\
=&\frac{2}{\omega_k}\int_{R/3}^{2R/3}2f'f\,dr
\int_0^{{\omega_k}}\sin^2\frac{\pi}{\omega_k}\,\theta\ d\theta
=-\frac{2}{\pi}\,\int_0^{\pi}\sin^2\theta\ d\theta=-1\,.
\end{align*}
\end{proof}
For notational convenience let us define the $\CO^n$-valued functions
$$
s\equiv(s_1,\cdots,s_n)\,,\quad
\sigma\equiv(\sigma_1,\cdots,\sigma_n)\,,
\quad g\equiv(g_1,\cdots,g_n)\,.
$$
By Lemma 2.3.6 and (the proof of) Theorem 2.3.7 in
\cite{[G]}, (\ref{number}) can be specified:
\begin{theorem}\label{ker}
For any $u\in \K(({\Delta^\circ_\Omegasf})^*)$ there exist an 
unique
$\xi_u\in\CO^n$ such that
$u=g\ccdot\xi_u$.
\end{theorem}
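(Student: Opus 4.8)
The plan is to show that $\{g_1,\dots,g_n\}$ is a basis of $\K(({\Delta^\circ_\Omegasf})^*)$; the representation $u=g\ccdot\xi_u$ and the uniqueness of $\xi_u$ then follow at once. A quick argument is a pure dimension count: by (\ref{number}) the space $\K(({\Delta^\circ_\Omegasf})^*)$ has dimension $n$, and by parts 2 and 3 of Lemma \ref{sigma} the $g_k$ are $n$ linearly independent elements of it, hence a basis. I would, however, prefer the constructive argument underlying the ``precising'' of (\ref{number}) in \cite{[G]}, since it also produces $\xi_u$ explicitly.

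Fix $u\in\K(({\Delta^\circ_\Omegasf})^*)$. By Theorem \ref{kernel}, $u$ is harmonic on $\Omegasf$ and $\hat\gamma_0 u=0$. The only non-elementary step is the singular-function expansion for the Laplacian on a plane polygon (\cite{[G]}, Lemma 2.3.6 and Theorem 2.3.7; see also \cite{[BS]}): among the homogeneous harmonic functions near a non-convex vertex $\ve_k\in\V$ vanishing on the two adjacent edges, the only one that lies in $L^2$ but not in $H^1$ near $\ve_k$ is $u_k^-(r_k,\theta_k)=\frac1{\sqrt\pi}\,r_k^{-\beta_k}\sin\beta_k\theta_k$ (no cusps are allowed, so $\omega_k<2\pi$ and the exponents $-m\beta_k$ with $m\ge2$ are already non-$L^2$). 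Hence there are unique $(\xi_u)_k\in\CO$ such that $u-(\xi_u)_k u_k^-$ is $H^1$ near $\ve_k$; since $u$ is harmonic off $\V$ and $H^2$ near the convex corners and along the regular parts of $\Gammasf$, while $\sigma_k=fu_k^-$ equals $u_k^-$ near $\ve_k$ and is supported in $\W^R_k$, this gives $u-\sum_k(\xi_u)_k\sigma_k\in H^1(\Omegasf)$. Combining $\hat\gamma_0 u=0$ with $\hat\gamma_0\sigma_k=0$ (Lemma \ref{sigma}, part 1) and the fact that $\hat\gamma_0$ restricted to $H^1(\Omegasf)$ is $\gamma_0$, this upgrades to $u-\sum_k(\xi_u)_k\sigma_k\in\D(\Delta^\max_\Omegasf)\cap H^1_0(\Omegasf)=\D(\Delta^F_\Omegasf)$.

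Now I would simply apply $\Delta^F_\Omegasf$ and use $\Delta_\Omegasf u=0$:
$$
\Delta^F_\Omegasf\Big(u-\sum_{k=1}^n(\xi_u)_k\sigma_k\Big)=-\sum_{k=1}^n(\xi_u)_k\,\Delta_\Omegasf\sigma_k\,,
$$
hence, inverting and recalling $g_k=\sigma_k+(-\Delta^F_\Omegasf)^{-1}\Delta_\Omegasf\sigma_k$, one gets $u-\sum_k(\xi_u)_k\sigma_k=(-\Delta^F_\Omegasf)^{-1}\sum_k(\xi_u)_k\,\Delta_\Omegasf\sigma_k=\sum_k(\xi_u)_k(g_k-\sigma_k)$, i.e. $u=\sum_k(\xi_u)_k g_k=g\ccdot\xi_u$ (the $g_k$ are real-valued). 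Uniqueness of $\xi_u$ is the linear independence in part 3 of Lemma \ref{sigma}; alternatively, pairing $u=g\ccdot\xi_u$ with $(-\Delta^F_\Omegasf)s_k$ in $L^2(\Omegasf)$ and invoking part 4 of Lemma \ref{sigma} yields the explicit formula $(\xi_u)_k=\langle(-\Delta^F_\Omegasf)s_k,u\rangle_{L^2(\Omegasf)}$, which also shows $u\mapsto\xi_u$ is linear.

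The only genuinely external ingredient — and the step I expect to be the main obstacle in a self-contained treatment — is the singular expansion guaranteeing $u-\sum_k(\xi_u)_k\sigma_k\in H^1(\Omegasf)$; everything else is routine manipulation on top of Lemma \ref{sigma}. Since this expansion is precisely what \cite{[G]} and \cite{[BS]} establish, in the paper I would simply cite it.
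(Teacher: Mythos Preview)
Your proposal is correct. The paper does not give an in-text proof at all: it simply cites Lemma 2.3.6 and (the proof of) Theorem 2.3.7 in \cite{[G]} as a refinement of (\ref{number}). Your dimension-count argument is the natural way to read that citation from within the paper---once (\ref{number}) is granted from \cite{[BS]}, \cite{[G]}, parts 2 and 3 of Lemma \ref{sigma} finish the job---so in that sense you agree with the paper. Your second, constructive argument goes further: it unpacks what the cited Grisvard results actually say (the singular-function expansion at each non-convex vertex) and shows how the coefficients $(\xi_u)_k$ arise; this is more than the paper provides, but it is exactly the content behind the citation, and your explicit formula $(\xi_u)_k=\langle(-\Delta^F_\Omegasf)s_k,u\rangle_{L^2(\Omegasf)}$ via part 4 of Lemma \ref{sigma} is a nice bonus.
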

In order to use the results given in the 
Appendix we need a more precise characterization of
$\D(\Delta_\Omegasf^{\max})\cap H_0^1(\Omegasf)$ i.e. of 
$\D(\Delta_\Omegasf^F)$:
\begin{theorem}\label{domF}
$$
\D(\Delta_\Omegasf^F)
=\{u\in L^2(\Omegasf)\,:\, u=u_\circ+s\ccdot\zeta_u\,,\ u_\circ\in \D(\Delta_\Omegasf^\circ)\,,
\, \zeta_{u}\in\CO^n\}\,.
$$
\end{theorem}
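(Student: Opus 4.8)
The plan is to read the right-hand side as the linear subspace $\mathcal W:=\D(\Delta_\Omegasf^\circ)+\mathrm{span}\{s_1,\dots,s_n\}$ of $L^2(\Omegasf)$ and to prove $\mathcal W=\D(\Delta_\Omegasf^F)$ by comparing $\dim\bigl(\mathcal W/\D(\Delta_\Omegasf^\circ)\bigr)$ with $\dim\bigl(\D(\Delta_\Omegasf^F)/\D(\Delta_\Omegasf^\circ)\bigr)$, both of which I expect to equal $n$. The easy inclusion $\mathcal W\subseteq\D(\Delta_\Omegasf^F)$ — that is, the ``$\supseteq$'' half of the statement — comes for free: by Lemma \ref{sigma}(1) each $s_k\in\D(\Delta_\Omegasf^F)$, and $\D(\Delta_\Omegasf^\circ)=H^2(\Omegasf)\cap H^1_0(\Omegasf)\subseteq\D(\Delta_\Omegasf^F)$, so any $u=u_\circ+s\ccdot\zeta_u$ belongs to $\D(\Delta_\Omegasf^{\max})\cap H^1_0(\Omegasf)=\D(\Delta_\Omegasf^F)$.

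Next I would show that $s_1,\dots,s_n$ are linearly independent modulo $\D(\Delta_\Omegasf^\circ)$, which gives $\dim\bigl(\mathcal W/\D(\Delta_\Omegasf^\circ)\bigr)=n$ and, as a bonus, uniqueness of the decomposition. For this I use the $n$ linear functionals $\Phi_i:\D(\Delta_\Omegasf^F)\to\CO$, $\Phi_i(u):=\langle g_i,(-\Delta_\Omegasf^F)u\rangle_{L^2(\Omegasf)}$. Each $\Phi_i$ vanishes on $\D(\Delta_\Omegasf^\circ)$: there $\Delta_\Omegasf^F$ acts as $\Delta_\Omegasf^\circ$, and since $g_i\in\K\bigl((\Delta_\Omegasf^\circ)^*\bigr)$ by Lemma \ref{sigma}(2), $\langle g_i,\Delta_\Omegasf^\circ u\rangle_{L^2(\Omegasf)}=\langle(\Delta_\Omegasf^\circ)^*g_i,u\rangle_{L^2(\Omegasf)}=0$. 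On the other hand, by Lemma \ref{sigma}(4), $\Phi_i(s_k)=\delta_{ik}$. Hence if $\sum_k c_k s_k\in\D(\Delta_\Omegasf^\circ)$, applying $\Phi_i$ yields $c_i=0$ for every $i$; so $\mathrm{span}\{s_k\}\cap\D(\Delta_\Omegasf^\circ)=\{0\}$ and in any representation $u=u_\circ+s\ccdot\zeta_u$ one necessarily has $\zeta_{u,i}=\Phi_i(u)$. (Alternatively, one can argue directly that $\sum_k c_k s_k\in H^2(\Omegasf)$ forces each $c_k=0$ by restricting to a small disk around $\ve_k$, where the sum reduces to $c_k u_k^+$ and $u_k^+\notin H^2$ since $\beta_k<1$.)

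It then remains to identify $\dim\bigl(\D(\Delta_\Omegasf^F)/\D(\Delta_\Omegasf^\circ)\bigr)$. Here I would use that $\Delta_\Omegasf^\circ\subseteq\Delta_\Omegasf^F$ — both act as the distributional Laplacian and $H^2(\Omegasf)\cap H^1_0(\Omegasf)\subseteq\D(\Delta_\Omegasf^{\max})\cap H^1_0(\Omegasf)$ — so that $\Delta_\Omegasf^F$ is a self-adjoint extension of the closed symmetric operator $\Delta_\Omegasf^\circ$, whose deficiency indices are $(n,n)$ by \cite{[BS]}. By the von Neumann description of the self-adjoint extensions of a symmetric operator with equal deficiency indices $(n,n)$ (as recorded in the Appendix), the domain of any such extension is obtained from $\D(\Delta_\Omegasf^\circ)$ by adjoining a subspace of dimension exactly $n$; thus $\dim\bigl(\D(\Delta_\Omegasf^F)/\D(\Delta_\Omegasf^\circ)\bigr)=n$. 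Since $\D(\Delta_\Omegasf^\circ)\subseteq\mathcal W\subseteq\D(\Delta_\Omegasf^F)$ and $\mathcal W/\D(\Delta_\Omegasf^\circ)$ is an $n$-dimensional subspace of the $n$-dimensional quotient $\D(\Delta_\Omegasf^F)/\D(\Delta_\Omegasf^\circ)$, it follows that $\mathcal W=\D(\Delta_\Omegasf^F)$, which is precisely the asserted identity.

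Everything except the last step is elementary bookkeeping built on Lemma \ref{sigma}; the substantive input is the combination of the deficiency-index computation of \cite{[BS]} with the abstract extension result in the Appendix, and this is where I expect any real work to lie. Should one prefer, the whole statement is also available essentially verbatim as the corner-singularity decomposition of the Dirichlet problem on a polygon in \cite{[G]} (Lemma 2.3.6 and Theorem 2.3.7), but the route above is self-contained modulo those two cited facts.
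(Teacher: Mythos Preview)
Your proof is correct. The paper's argument rests on the same two ingredients --- Lemma~\ref{sigma}(4) and $\dim\K((\Delta_\Omegasf^\circ)^*)=n$ --- but it is organized a bit more directly: rather than counting $\dim\bigl(\D(\Delta_\Omegasf^F)/\D(\Delta_\Omegasf^\circ)\bigr)$ via von~Neumann, the paper works on the \emph{range} side, using the orthogonal decomposition $L^2(\Omegasf)=\R(\Delta_\Omegasf^\circ)\oplus\K((\Delta_\Omegasf^\circ)^*)$ (available since $\Delta_\Omegasf^\circ$ is closed and bounded below). Point~(4) of Lemma~\ref{sigma} then says the $n$ vectors $\Delta_\Omegasf^F s_k$ span a complement of $\R(\Delta_\Omegasf^\circ)$, so for any $u\in\D(\Delta_\Omegasf^F)$ one writes $\Delta_\Omegasf^F u=\Delta_\Omegasf^\circ u_\circ+\Delta_\Omegasf^F s\ccdot\zeta_u$ and invokes injectivity of $\Delta_\Omegasf^F$ (i.e.\ $0\in\rho(\Delta_\Omegasf^F)$) to pull this back to the domain. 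Your functionals $\Phi_i$ are exactly the components of the projection onto $\K((\Delta_\Omegasf^\circ)^*)$ composed with $-\Delta_\Omegasf^F$, so the two arguments differ only in packaging; the paper's route avoids the explicit appeal to von~Neumann's graph decomposition, while yours makes the quotient-dimension reasoning transparent.
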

\begin{proof} By Theorem \ref{ker} and point 4 in
Lemma \ref{sigma}, the linearly independent functions 
$\Delta_\Omegasf^Fs_k\,$ are not
  orthogonal to $\K(({\Delta_\Omegasf^\circ})^*)$. Thus, given 
$u\in \D(\Delta_\Omegasf^F)$,  
the decomposition
$L^2(\Omegasf)=
\R (\Delta_\Omegasf^\circ)\oplus \K(({\Delta_\Omegasf^\circ})^*)$ 
implies
that there exist
unique $\tilde u_\circ\in \D(\Delta_\Omegasf^\circ)$ and
$\zeta_{u}\in\CO^n$ such that
$$\Delta_\Omegasf^F u=\Delta_\Omegasf^\circ\tilde  u_\circ+
\Delta_\Omegasf^F s\ccdot\zeta_u\,.$$
\end{proof}
Next we introduce a convenient map $\tau^\V_\Omegasf$ such that 
$\D({\Delta^\circ_\Omegasf})=\K(\tau^\V_\Omegasf)$:
\begin{lemma}\label{tau} Let 
$$
\tau^\V_\Omegasf:\D(\Delta_\Omegasf^F)\to\CO^n\,,
\quad \left(\tau^\V_\Omegasf u\right)_k
:=\frac{\sqrt{\pi^3}}{4}\,(2+\beta_k)\,
\lim_{R\downarrow  0}\ \frac{1}{R^{\beta_k}}\,\langle u\rangle_{\W^R_k}
\,,
$$
where $\langle u\rangle_{\W^R_k}$ denotes the mean of $u$ over 
the wedge $\W^R_k$. Then $\tau^\V_\Omegasf$ is well defined, 
continuous, surjective 
and $\K(\tau^\V_\Omegasf)=\D({\Delta^\circ_\Omegasf})$.
\end{lemma}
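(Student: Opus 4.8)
The plan is to reduce the statement to the decomposition $u=u_\circ+s\ccdot\zeta_u$ furnished by Theorem \ref{domF} and to show that $\tau^\V_\Omegasf$, up to its normalizing constants, simply extracts the vector $\zeta_u\in\CO^n$.

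First I would treat the two summands separately. If $u_\circ\in\D(\Delta^\circ_\Omegasf)=H^2(\Omegasf)\cap H^1_0(\Omegasf)$, then $u_\circ$ is continuous on $\bar\Omegasf$ by (\ref{emb}) and vanishes on $\Gammasf$, so $u_\circ(\ve_k)=0$; hence (\ref{emb2}) applied with reference point $\ve_k$ gives $|u_\circ(\x)|\le c\,\|u_\circ\|_{H^2(\Omegasf)}\,\|\x-\ve_k\|^{\alpha}$ for $\x$ near $\ve_k$, and therefore $|\langle u_\circ\rangle_{\W^R_k}|\le c\,\|u_\circ\|_{H^2(\Omegasf)}\,R^{\alpha}$. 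Since each $\beta_k=\pi/\omega_k<1$, the embedding exponent $\alpha\in(0,1)$ in (\ref{emb})--(\ref{emb2}) may be taken with $\alpha>\beta_k$ for every $k$, and then $R^{-\beta_k}\langle u_\circ\rangle_{\W^R_k}\to0$ as $R\downarrow0$. For the generators $s_j$, fix $R$ small enough that $f\equiv1$ on $\W^R_k$ and $\W^R_k$ is disjoint from $\mathrm{supp}\,s_j$ for $j\ne k$; then $s_j$ vanishes on $\W^R_k$ for $j\ne k$, while on $\W^R_k$ one has $s_k(r,\theta)=\frac{1}{\sqrt{\pi}}\,r^{\beta_k}\sin\beta_k\theta$. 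A direct integration in polar coordinates centered at $\ve_k$ — using $|\W^R_k|=\frac{1}{2}\omega_kR^2$, $\int_0^{\omega_k}\sin(\beta_k\theta)\,d\theta=2/\beta_k$ (as $\beta_k\omega_k=\pi$), $\int_0^R r^{1+\beta_k}\,dr=R^{2+\beta_k}/(2+\beta_k)$, and again $\beta_k\omega_k=\pi$ — yields $\langle s_j\rangle_{\W^R_k}=\delta_{jk}\,\dfrac{4\,R^{\beta_k}}{\sqrt{\pi^3}\,(2+\beta_k)}$.

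Combining these through Theorem \ref{domF}: for every $u=u_\circ+s\ccdot\zeta_u\in\D(\Delta^F_\Omegasf)$ the limit $\lim_{R\downarrow0}R^{-\beta_k}\langle u\rangle_{\W^R_k}$ exists and equals $\dfrac{4}{\sqrt{\pi^3}\,(2+\beta_k)}(\zeta_u)_k$, so $\tau^\V_\Omegasf$ is well defined and, by the choice of the constant $\frac{\sqrt{\pi^3}}{4}(2+\beta_k)$, one gets $(\tau^\V_\Omegasf u)_k=(\zeta_u)_k$; in particular $\tau^\V_\Omegasf s_k$ is the $k$-th vector of the canonical basis of $\CO^n$, which already proves surjectivity. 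For continuity I would drop the limit and identify $\zeta_u$ intrinsically: writing $\Delta^F_\Omegasf u=\Delta^\circ_\Omegasf u_\circ+\Delta^F_\Omegasf s\ccdot\zeta_u$ as in the proof of Theorem \ref{domF} and pairing with $g_i$, the term $\langle g_i,\Delta^\circ_\Omegasf u_\circ\rangle_{L^2(\Omegasf)}$ vanishes since $g_i\in\K((\Delta^\circ_\Omegasf)^*)$, which is orthogonal to $\R(\Delta^\circ_\Omegasf)$, while point 4 of Lemma \ref{sigma} gives $\langle g_i,\Delta^F_\Omegasf s_k\rangle_{L^2(\Omegasf)}=-\delta_{ik}$; hence
$$
(\tau^\V_\Omegasf u)_i=(\zeta_u)_i=\langle g_i,-\Delta^F_\Omegasf u\rangle_{L^2(\Omegasf)}\,,
$$
which is controlled by $\|g_i\|_{L^2(\Omegasf)}\,\|\Delta^F_\Omegasf u\|_{L^2(\Omegasf)}$, so $\tau^\V_\Omegasf$ is continuous for the graph norm of $\D(\Delta^F_\Omegasf)$. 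Finally, by the uniqueness part of Theorem \ref{domF}, $\tau^\V_\Omegasf u=\zeta_u=0$ if and only if $u=u_\circ\in\D(\Delta^\circ_\Omegasf)$, i.e. $\K(\tau^\V_\Omegasf)=\D(\Delta^\circ_\Omegasf)$.

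The only genuinely delicate ingredient is the vanishing of $R^{-\beta_k}\langle u_\circ\rangle_{\W^R_k}$ for $u_\circ\in H^2(\Omegasf)\cap H^1_0(\Omegasf)$: it rests on the Sobolev embedding $H^2(\Omegasf)\subseteq C^\alpha(\bar\Omegasf)$ being available for some $\alpha$ strictly larger than every $\beta_k$, which is exactly what the non-convexity hypothesis $\omega_k>\pi$, i.e. $\beta_k<1$, guarantees. Everything else is the explicit sector integral and bookkeeping with the already established structure of $\D(\Delta^F_\Omegasf)$.
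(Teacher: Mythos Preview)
Your proof is correct and follows the same route as the paper's: both reduce to the decomposition $u=u_\circ+s\ccdot\zeta_u$ of Theorem \ref{domF} and verify $\tau^\V_\Omegasf u=\zeta_u$ via the same H\"older estimate on $u_\circ$ and the same explicit wedge integral for $s_k$. The only deviation is in the continuity step: the paper argues abstractly that $u\mapsto\zeta_u$ is continuous because $\D(\Delta^\circ_\Omegasf)$ and the span of the $s_k$'s are closed complementary subspaces of $\D(\Delta^F_\Omegasf)$, whereas you derive the explicit formula $(\zeta_u)_i=\langle g_i,-\Delta^F_\Omegasf u\rangle_{L^2(\Omegasf)}$ (which the paper itself establishes later, in the proof of Lemma \ref{G0}) and bound it directly --- a slightly more concrete but equivalent argument.
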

\begin{proof} By Theorem \ref{domF} $\D(\Delta_\Omegasf^F)=
\D(\Delta_\Omegasf^\circ)+{\mathcal V}$, where both 
$\D(\Delta_\Omegasf^\circ)$ and 
${\mathcal V}$ are closed subspaces of 
$\D(\Delta_\Omegasf^F)$,  and $\D(\Delta_\Omegasf^\circ)\cap
{\mathcal V}=\{0\}$. Therefore the map   
$$
P_\circ:\D(\Delta_\Omegasf^F)\to\CO^n\,,\quad P_\circ u=\zeta_u\,,
$$ 
given by the composition of the continuous projection onto
$\mathcal V$ with the identification map 
giving ${\mathcal V}\simeq\CO^n$, 
is  continuous. To conclude we show
that $\tau^\V_\Omegasf=P_\circ$, i.e. $\tau^\V_\Omegasf u=\zeta_u$.
By Theorem \ref{domF} one has $u=u_\circ+s\ccdot\zeta_u$. 
Thus, by using (\ref{emb2}) with $\alpha\in (\beta_k,1)$, 
one has
$$
|\tau^\V_\Omegasf u_\circ|\le\,c\,
\lim_{R\downarrow  0}\ \frac{1}{R^{\beta_k}}
\left(\frac{\omega_k}{2}\,R^2\right)^{-1}
\int_{W^R_k} |u_\circ(x)|\,dx\le c\,\lim_{R\downarrow  0}\, 
R^{\alpha-\beta_k}=0\,, 
$$
while
\begin{align*}
&(\tau^\V_\Omegasf s\ccdot\zeta_u)_k
=\frac{\sqrt{\pi^3}}{4}\,(2+\beta_k)\,(\zeta_u)_k\,
\lim_{R\downarrow  0}\ \frac{1}{R^{\beta_k}}\,
\langle s_k\rangle_{\W^R_k}\\
=&(\zeta_u)_k\,\frac{\beta_k}{2}\int_0^{\omega_k}
\sin\beta_k\theta\ d\theta\ \lim_{R\downarrow  0}\,
 \frac{2+\beta_k}{R^{2+\beta_k}}\,
\int_0^{R}r^{1+\beta_k}dr
=(\zeta_u)_k\,,
\end{align*}
and the proof is done.
\end{proof}
Combining Theorem \ref{domF} and Lemma \ref{tau} with the results  provided in Appendix 
in the case $A=\Delta_\Omegasf^F$ and 
$\tau=\tau_\Omegasf^\V$,  we can obtain easily a resolvent formulae for 
all self-adjoint extensions of $\Delta_\Omegasf^\circ$. To this end we give the following
\begin{lemma} \label{G0} Let 
$$
G_z:\CO^n\to L^2(\Omegasf)\,,\qquad G_z^*:L^2(\Omegasf)\to\CO^n\,,\qquad z\in\rho(\Delta_\Omegasf^F)\,,
$$
be defined by  
$$
G_z:=\left(\tau^\V_\Omegasf(-\Delta_\Omegasf^F+\bar z)^{-1}\right)^*\,.
$$
Then
$$
G_z\xi=
\sigma\ccdot\xi-
(-\Delta_\Omegasf^F+z)^{-1}(-\Delta_\Omegasf+z)\sigma\ccdot\xi
$$
and
$$
G_z^*u=
\langle\sigma,u\rangle_{L^2(\Omegasf)}-
\langle(-\Delta_\Omegasf^F+z)^{-1}(-\Delta_\Omegasf+z)\sigma,
u\rangle_{L^2(\Omegasf)}\,.
$$
\end{lemma}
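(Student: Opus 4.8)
The plan is to verify the stated expression for $G_z$ by a direct duality computation and then to read off the expression for $G_z^*$ by taking the adjoint of the explicit formula. First I would record that $\tau^\V_\Omegasf(-\Delta_\Omegasf^F+\bar z)^{-1}:L^2(\Omegasf)\to\CO^n$ is bounded, being the composition of $(-\Delta_\Omegasf^F+\bar z)^{-1}:L^2(\Omegasf)\to(\D(\Delta_\Omegasf^F),\|\cdot\|_{\Delta_\Omegasf^F})$ with the continuous map $\tau^\V_\Omegasf$ of Lemma \ref{tau}; hence $G_z$ is a well-defined bounded operator $\CO^n\to L^2(\Omegasf)$ and $G_z^*=\tau^\V_\Omegasf(-\Delta_\Omegasf^F+\bar z)^{-1}$. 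Then I would introduce the candidate
\[
w_\xi:=\sigma\ccdot\xi-(-\Delta_\Omegasf^F+z)^{-1}(-\Delta_\Omegasf+z)(\sigma\ccdot\xi)\,,
\]
which is a bona fide element of $L^2(\Omegasf)$ because each $\Delta_\Omegasf\sigma_k$ is an $L^2$ function supported in $\W_k$, as already used in the proof of Lemma \ref{sigma}. One also checks, although it is not strictly needed below, that $w_\xi\in\D(\Delta_\Omegasf^\max)\cap\K(\hat\gamma_0)$ with $(-\Delta_\Omegasf+z)w_\xi=0$: indeed $\hat\gamma_0(\sigma\ccdot\xi)=0$ since $\sigma_k\in\K(\hat\gamma_0)$, while $\hat\gamma_0$ annihilates $\D(\Delta_\Omegasf^F)\subset H^1_0(\Omegasf)$, and the $(-\Delta_\Omegasf+z)$-term is cancelled by construction.

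The heart of the argument is the identity $\langle w_\xi,u\rangle_{L^2(\Omegasf)}=\langle G_z\xi,u\rangle_{L^2(\Omegasf)}$ for every $u\in L^2(\Omegasf)$. Putting $v:=(-\Delta_\Omegasf^F+\bar z)^{-1}u\in\D(\Delta_\Omegasf^F)$, so that $u=-\Delta_\Omegasf v+\bar z\,v$ and $\langle G_z\xi,u\rangle_{L^2(\Omegasf)}=\xi\ccdot(\tau^\V_\Omegasf v)$, one expands $\langle w_\xi,u\rangle_{L^2(\Omegasf)}$ and moves the resolvent onto $u$ (the adjoint of $(-\Delta_\Omegasf^F+z)^{-1}$ being $(-\Delta_\Omegasf^F+\bar z)^{-1}$); the two terms proportional to $\bar z$ then cancel, leaving the Green-type pairing
\[
\langle w_\xi,u\rangle_{L^2(\Omegasf)}=\langle\sigma\ccdot\xi,-\Delta_\Omegasf v\rangle_{L^2(\Omegasf)}-\langle-\Delta_\Omegasf(\sigma\ccdot\xi),v\rangle_{L^2(\Omegasf)}\,.
\]
Now I would use Theorem \ref{domF} to write $v=v_\circ+s\ccdot\zeta_v$ with $v_\circ\in\D(\Delta_\Omegasf^\circ)$ and, by Lemma \ref{tau}, $\zeta_v=\tau^\V_\Omegasf v$, and treat the two contributions separately. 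On the $v_\circ$-part the pairing vanishes, since $\sigma_k\in\D((\Delta_\Omegasf^\circ)^*)$ by point 1 of Lemma \ref{sigma} and $(\Delta_\Omegasf^\circ)^*$ acts as the distributional Laplacian. On the $s\ccdot\zeta_v$-part, for each $k,j$ one has, using $s_j\in\D(\Delta_\Omegasf^F)$, self-adjointness of $(-\Delta_\Omegasf^F)^{-1}$ and the definition of $g_k$,
\[
-\langle\sigma_k,\Delta_\Omegasf s_j\rangle_{L^2(\Omegasf)}+\langle\Delta_\Omegasf\sigma_k,s_j\rangle_{L^2(\Omegasf)}=\langle g_k,-\Delta_\Omegasf^F s_j\rangle_{L^2(\Omegasf)}=\delta_{kj}
\]
by point 4 of Lemma \ref{sigma}; summing against $\xi$ and $\zeta_v$ (and using that $\sigma$, $s$, $g$ are real-valued, so that only the conjugation of the $\CO^n$-coefficients survives) gives $\xi\ccdot\zeta_v$. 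Therefore $\langle w_\xi,u\rangle_{L^2(\Omegasf)}=\xi\ccdot(\tau^\V_\Omegasf v)=\langle G_z\xi,u\rangle_{L^2(\Omegasf)}$ for all $u$, whence $G_z\xi=w_\xi$, which is the first asserted formula.

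Finally, the formula for $G_z^*$ follows by taking the adjoint of the expression just obtained: computing $\langle G_z\xi,u\rangle_{L^2(\Omegasf)}$ from $G_z\xi=\sigma\ccdot\xi-(-\Delta_\Omegasf^F+z)^{-1}(-\Delta_\Omegasf+z)(\sigma\ccdot\xi)$, moving $(-\Delta_\Omegasf^F+z)^{-1}$ onto $u$ and comparing with $\xi\ccdot(G_z^*u)$ yields $G_z^*u=\langle\sigma,u\rangle_{L^2(\Omegasf)}-\langle(-\Delta_\Omegasf^F+z)^{-1}(-\Delta_\Omegasf+z)\sigma,u\rangle_{L^2(\Omegasf)}$, the stated identity. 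I expect the only genuinely delicate step to be the Green-type pairing and its reduction to point 4 of Lemma \ref{sigma}: one must keep careful track of which operators ($\Delta_\Omegasf$, $\Delta_\Omegasf^\circ$, $(\Delta_\Omegasf^\circ)^*$, $\Delta_\Omegasf^F$) act on which function and which of them act as the distributional Laplacian, of the self-adjointness used to pass resolvents from one slot of the inner product to the other, and of the conjugation conventions in the $\CO^n$-inner product; everything else is a routine bounded-operator computation.
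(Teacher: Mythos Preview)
Your argument is correct and rests on exactly the same ingredients as the paper's proof: the decomposition of Theorem~\ref{domF}, the identification $\tau^\V_\Omegasf v=\zeta_v$ from Lemma~\ref{tau}, and point~4 of Lemma~\ref{sigma}. The only organisational difference is that the paper first establishes the case $z=0$ (showing $\langle g,-\Delta_\Omegasf^F u\rangle_{L^2(\Omegasf)}=\tau^\V_\Omegasf u$, hence $G_0\xi=g\ccdot\xi$) and then passes to general $z$ via the resolvent identity~(\ref{GR1}), whereas you carry the parameter $z$ through the duality computation from the start; the $\bar z$-cancellation you perform is precisely what the paper's use of~(\ref{GR1}) encodes.
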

\begin{proof} By Lemma \ref{sigma}, Theorem \ref{domF} and 
Lemma \ref{tau}  
one has
\begin{align*}
&\langle g,-\Delta_\Omegasf^Fu\rangle_{L^2(\Omega)}=
\langle g,-\Delta_\Omegasf^\circ u_\circ -
\Delta_\Omegasf^F s\ccdot\zeta_u
\rangle_{L^2(\Omega)}\\
=&\langle{(-\Delta_\Omegasf^\circ )}^*g,
u_\circ\rangle_{L^2(\Omega)}+\langle g,-
\Delta_\Omegasf^Fs
\rangle_{L^2(\Omega)}\ccdot\zeta_u\\
=&\zeta_u=\tau^\V_\Omegasf u\,.
\end{align*}
Thus
\begin{align*}
&\langle G_0\xi,u\rangle_{L^2(\Omegasf)}=
\xi\ccdot\tau_\Omegasf{(-\Delta_\Omegasf^F)}^{-1}u
=\xi\ccdot\langle g,u\rangle_{L^2(\Omega)}
\,,
\end{align*}
i.e.
$$
G_0\,\xi=g\ccdot\xi=
\sigma\ccdot\xi+
(-\Delta_\Omegasf^F)^{-1}\Delta_\Omegasf\sigma\ccdot\xi\,.
$$
By (\ref{GR1}) one has then
\begin{align*}
G_z\xi=&(\uno-z(-\Delta_\Omegasf^F+z)^{-1})G_0\xi
=(g-z(-\Delta_\Omegasf^F+z)^{-1}g)\ccdot\xi\\
=&(\sigma-(-\Delta_\Omegasf^F+z)^{-1}(-\Delta_\Omegasf+z)\sigma)\ccdot\xi\,.
\end{align*}
\end{proof}
Notice that 
in next theorem we use the extension, 
denoted by the same symbol, of $(\tau^\V_\Omegasf)_k$ to 
functions coinciding away from $\ve_k$ with functions in 
$\D(\Delta_\Omegasf^F)$ .
\begin{theorem}\label{estensioniOmega2}
Any self-adjoint extension of $\Delta_\Omegasf^\circ$ is of the kind
$$
\Delta_{\Omegasf}^{\Pi,\Theta}:\D(\Delta_{\Omegasf}^{\Pi,\Theta})
\subset L^2(\Omegasf)\to L^2(\Omegasf)\,, \quad
\Delta_{\Omegasf}^{\Pi,\Theta}u:=\Delta_\Omegasf u\,,$$
\begin{align*}
\D(\Delta_{\Omegasf}^{\Pi,\Theta}):=
\{u\in L^2(\Omegasf)\,:\,u=u_0+g\ccdot\xi_u\,,\ u_0\in 
\D(\Delta_{\Omegasf}^F)\,,\ \xi_u\in\CO^n_\Pi\,,\ 
\Pi\hat \tau^\V_\Omegasf u=\Theta\xi_u\}\,,
\end{align*}
where $(\Pi,\Theta)\in\E(\CO^n)$ and
$$
(\hat \tau^\V_\Omegasf u)_k=(\tau^\V_\Omegasf (u-(\xi_u)_k\,g_k))_k\,.
$$
Moreover
\begin{align*}
(-\Delta_\Omegasf^{\Pi,\Theta}+z)^{-1}
=(-\Delta^F_{\Omegasf}+z)^{-1}+G_z
\Pi\,(\Theta+\Pi\, \Gamma_z\Pi)^{-1}
\Pi\, G_{\bar z}^*\,,
\end{align*}
where
\begin{align*}
&(\Gamma_z)_{ij}=\left(z\|\sigma_i\|^2_{L^2(\Omegasf)}+
\|(-\Delta_\Omegasf^F)^{-\frac12}\Delta_{\Omegasf}\sigma_i\|^2_{L^2(\Omegasf)}\right)\,
\delta_{ij}\\
-&\langle(-\Delta_\Omegasf^F+z)^{-1}(-\Delta_\Omegasf+z)\sigma_i,
(-\Delta_\Omegasf+z)
\sigma_j\rangle_{L^2(\Omegasf)}
\end{align*}
\end{theorem}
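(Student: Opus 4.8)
The plan is to deduce Theorem~\ref{estensioniOmega2} from the abstract Kre\u\i n-type scheme of the Appendix, applied with $\H=L^2(\Omegasf)$, $A=\Delta_\Omegasf^F$ and $\tau=\tau_\Omegasf^\V$. By Lemma~\ref{tau} the map $\tau_\Omegasf^\V$ is continuous on $\D(\Delta_\Omegasf^F)$, surjective onto $\CO^n$, and $\K(\tau_\Omegasf^\V)=\D(\Delta_\Omegasf^\circ)$ is dense in $L^2(\Omegasf)$; moreover $\Delta_\Omegasf^\circ=\Delta_\Omegasf^F|\K(\tau_\Omegasf^\V)$ is closed and symmetric with deficiency indices $(n,n)$ and $0\in\rho(\Delta_\Omegasf^F)$. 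Hence the Appendix applies and yields at once: the correspondence $(\Pi,\Theta)\mapsto\Delta_\Omegasf^{\Pi,\Theta}$ is a bijection between $\E(\CO^n)$ and the set of self-adjoint extensions of $\Delta_\Omegasf^\circ$, each acting as $\Delta_\Omegasf$; the announced Kre\u\i n resolvent formula holds with $G_z=(\tau_\Omegasf^\V(-\Delta_\Omegasf^F+\bar z)^{-1})^*$ and a suitable Weyl matrix $\Gamma_z$; and, for every fixed $z\in\rho(\Delta_\Omegasf^F)$, $\D(\Delta_\Omegasf^{\Pi,\Theta})=\{u=u_0+G_z\Pi\xi_u:\ u_0\in\D(\Delta_\Omegasf^F),\ \xi_u\in\CO^n_\Pi,\ \Pi\,\tau_\Omegasf^\V u_0=(\Theta+\Pi\Gamma_z\Pi)\xi_u\}$. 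The explicit form of $G_z$, and in particular $G_0\xi=g\ccdot\xi$, is Lemma~\ref{G0}; note that the decomposition $u=u_0+g\ccdot\xi_u$ with $u_0\in\D(\Delta_\Omegasf^F)$ is unique, since any nonzero $g\ccdot\xi$ carries a non-$H^1$ term $\sim r_k^{-\beta_k}\sin\beta_k\theta_k$ at each $\ve_k$ with $\xi_k\neq0$ while $\D(\Delta_\Omegasf^F)\subset H^1(\Omegasf)$. So there remain two tasks: (i) make $\Gamma_z$ explicit, and (ii) rewrite the domain constraint at $z=0$ in the stated form involving $\hat\tau_\Omegasf^\V$.

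For (i) I would start from the relations of the Appendix expressing $\Gamma_z$ through $\Gamma_0$ and $G_z$ --- a first-resolvent-type identity of the form $\Gamma_z-\Gamma_w=(w-z)\,G_{\bar w}^*G_z$ (by Lemma~\ref{G0} this difference maps into $\D(\Delta_\Omegasf^F)$, so that $\tau_\Omegasf^\V$ may be applied to it) together with the normalisation fixing $\Gamma_0$ --- and insert $G_z\xi=\sigma\ccdot\xi-(-\Delta_\Omegasf^F+z)^{-1}(-\Delta_\Omegasf+z)\sigma\ccdot\xi$. Simplifying by means of $(-\Delta_\Omegasf^F)^{-1}\Delta_\Omegasf\sigma_i=g_i-\sigma_i\in\D(\Delta_\Omegasf^F)$, the harmonicity of the $g_i$, the support of $\Delta_\Omegasf\sigma_i$ in $\W^{2R/3}_i\backslash\W^{R/3}_i$ (hence away from $\ve_i$ and from every $\W^R_j$, $j\neq i$), and the relation $\langle g_i,(-\Delta_\Omegasf^F)s_k\rangle_{L^2(\Omegasf)}=\delta_{ik}$ of Lemma~\ref{sigma}, one obtains the announced matrix $\Gamma_z$; in particular $(\Gamma_0)_{ii}=0$ and, using the self-adjointness of $(-\Delta_\Omegasf^F)^{-1}$, $(\Gamma_0)_{ij}=-\langle g_i,\Delta_\Omegasf\sigma_j\rangle_{L^2(\Omegasf)}$ for $i\neq j$. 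I expect this to be the main obstacle: beyond the need for the precise additive normalisation of $\Gamma_0$ from the Appendix, the simplification relies on a careful handling of the boundary integrals generated by the $r_k^{\pm\beta_k}$-type terms near the non-convex vertices --- the very integration by parts already performed in point 4 of Lemma~\ref{sigma}.

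For (ii), fix $u=u_0+g\ccdot\xi_u$ with $u_0\in\D(\Delta_\Omegasf^F)$ and $\xi_u\in\CO^n_\Pi$, so $u-(\xi_u)_k g_k=u_0+\sum_{j\neq k}(\xi_u)_j g_j$. Near $\ve_k$ each $g_j$ with $j\neq k$ equals $(-\Delta_\Omegasf^F)^{-1}\Delta_\Omegasf\sigma_j$ (as $\sigma_j$ vanishes there), hence belongs to $\D(\Delta_\Omegasf^F)\subset H^1_0(\Omegasf)$ and is harmonic near $\ve_k$; being harmonic with zero Dirichlet data on the two edges at $\ve_k$, it coincides near $\ve_k$ with $(\tau_\Omegasf^\V g_j)_k\,s_k$ plus a function that is $H^2$ and $H^1_0$ there. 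Thus $u-(\xi_u)_k g_k$ coincides near $\ve_k$ with an element of $\D(\Delta_\Omegasf^F)$, $\hat\tau_\Omegasf^\V u$ is well defined, and $(\hat\tau_\Omegasf^\V u)_k=(\tau_\Omegasf^\V u_0)_k+\sum_{j\neq k}(\xi_u)_j(\tau_\Omegasf^\V g_j)_k$. The key point is the identity $(\tau_\Omegasf^\V g_j)_i=-(\Gamma_0)_{ij}$ for $i\neq j$: applying $(\tau_\Omegasf^\V v)_i=\langle g_i,-\Delta_\Omegasf^F v\rangle_{L^2(\Omegasf)}$ --- established inside the proof of Lemma~\ref{G0} for $v\in\D(\Delta_\Omegasf^F)$ --- to $v=g_j-\sigma_j=(-\Delta_\Omegasf^F)^{-1}\Delta_\Omegasf\sigma_j$, and noting that $g_j-\sigma_j$ agrees with $g_j$ near $\ve_i$, one finds $(\tau_\Omegasf^\V g_j)_i=(\tau_\Omegasf^\V(g_j-\sigma_j))_i=\langle g_i,\Delta_\Omegasf\sigma_j\rangle_{L^2(\Omegasf)}=-(\Gamma_0)_{ij}$. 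Together with $(\Gamma_0)_{ii}=0$ this gives $\hat\tau_\Omegasf^\V u=\tau_\Omegasf^\V u_0-\Gamma_0\xi_u$; hence, since $\Pi\xi_u=\xi_u$, the abstract constraint $\Pi\,\tau_\Omegasf^\V u_0=(\Theta+\Pi\Gamma_0\Pi)\xi_u$ is equivalent to $\Pi\,\hat\tau_\Omegasf^\V u=\Theta\xi_u$, which is the stated domain description. With $G_z$ and $\Gamma_z$ explicit the resolvent formula is the abstract one, and the proof is complete.
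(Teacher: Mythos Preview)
Your approach is correct and is essentially the one the paper takes: apply Theorem~\ref{estensioni} with $A=\Delta_\Omegasf^F$ and $\tau=\tau_\Omegasf^\V$, use Lemma~\ref{G0} to identify $G_0\xi=g\ccdot\xi$, compute the discrepancy between $\tau_\Omegasf^\V u_0$ and $\hat\tau_\Omegasf^\V u$ as an off-diagonal matrix (your $-\Gamma_0$, the paper's $\Lambda$), and absorb that matrix into $\Theta$ by a shift. The only organisational difference is that the paper first computes the full abstract Weyl matrix $zG_0^*G_z$ via $\tau_\Omegasf^\V(G_0-G_z)$ and then splits it as $\Lambda+\Gamma_z$, whereas you plan to fix $\Gamma_0$ from the domain identity and then pass to $\Gamma_z$ by a resolvent-type increment.

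Two small corrections. First, the Weyl identity has the opposite sign: from (\ref{GR1}) and the definition of $G_{\bar w}^*=\tau_\Omegasf^\V(-\Delta_\Omegasf^F+w)^{-1}$ one gets $\Gamma_z-\Gamma_w=(z-w)\,G_{\bar w}^*G_z$, not $(w-z)$. Second, the computation of $\Gamma_z$ does not actually require any ``boundary integrals near the vertices'': since $G_0-G_z\in\D(\Delta_\Omegasf^F)$, one simply applies $\tau_\Omegasf^\V$ via the identity $\tau_\Omegasf^\V(-\Delta_\Omegasf^F+z)^{-1}=G_{\bar z}^*$ together with the explicit formula for $G_z^*$ in Lemma~\ref{G0}; all the inner products involve $(-\Delta_\Omegasf+z)\sigma_i$, whose support lies in the annular region $\W^{2R/3}_i\setminus\W^{R/3}_i$, so no singular integration by parts is needed at this stage (the one in Lemma~\ref{sigma}, point 4, was already absorbed into the proof of Lemma~\ref{G0}).
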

\begin{proof}
By Theorem \ref{estensioni}  
any $u=u_0+G_0\xi_u=u_0+g\ccdot\xi_u$ in the domain of a 
self-adjoint extension has to satisfy the boundary condition
$\Pi\tau^\V_\Omegasf u_0=\tilde\Theta\xi_u$, 
for some $(\Pi,\tilde\Theta)\in\E(\CO^n)$. 
Thus
\begin{align*}
(\tau_\Omegasf^\V u_0)_i=(\tau_\Omegasf^\V  
(u-g_i(\xi_u)_i))_i-\sum_{j\not=i}\left(\tau_\Omegasf^\V  
(g_j(\xi_u)_j)\right)_i
=(\hat\tau_\Omegasf^\V  u)_i-(\Lambda\xi_u)_i\,,
\end{align*}
where, by Lemma \ref{G0},
\begin{equation}\label{lambda}
\Lambda_{ij}=\langle(-\Delta_\Omegasf^F)^{-1}
\Delta_\Omegasf\sigma_i,
\Delta_\Omegasf\sigma_j\rangle_{L^2(\Omegasf)}
(1-\delta_{ij})\,.
\end{equation}
Moreover, by (\ref{gamma1}) and Lemma \ref{G0}, one has
\begin{align*}
&z(G^*_0G_z)_{ij}=(\tau^\V_{\Omegasf}(G_0-G_z))_{ij}\\
=&(\tau^\V_{\Omegasf}(
(-\Delta_\Omegasf^F)^{-1}\Delta_\Omegasf\sigma
+(-\Delta_\Omegasf^F+z)^{-1}(-\Delta_\Omegasf+z)\sigma))_{ij}\\
=&(G_0^*\Delta_\Omegasf\sigma)_{ij}+
(G_z^*(-\Delta_\Omegasf+z)\sigma)_{ij}\\
=&\Lambda_{ij}+\left(\langle\sigma_i,\Delta_\Omegasf\sigma_i
\rangle_{L^2(\Omegasf)}+\langle(-\Delta^F_\Omegasf)^{-1}
\Delta_\Omegasf\sigma_i,\Delta_\Omegasf\sigma_i
\rangle_{L^2(\Omegasf)}\right)\,\delta_{ij}\\
&+
\langle\sigma_i,(-\Delta_\Omegasf+z)\sigma_i\rangle_{L^2(\Omegasf)}
\,\delta_{ij}\\
&-\langle(-\Delta_\Omegasf^F+z)^{-1}(-\Delta_\Omegasf+z)\sigma_i,
(-\Delta_\Omegasf+z)
\sigma_j\rangle_{L^2(\Omegasf)}\\
=&\Lambda_{ij}+(\Gamma_z)_{ij}\,.
\end{align*}
The proof is then concluded by taking $\tilde\Theta=\Theta
-\Pi \Lambda\Pi$.
\end{proof}

Since $g_k-\sigma_k\in\D(\Delta^F_\Omegasf)$, 
Theorem \ref{estensioniOmega2} admits an alternative version:
\begin{theorem}\label{estensioniOmega}
Any self-adjoint extension of $\Delta_\Omegasf^\circ$ is of the kind
$$
\tilde\Delta_{\Omegasf}^{\Pi,\Theta}:
\D(\tilde\Delta_{\Omegasf}^{\Pi,\Theta})
\subset L^2(\Omegasf)\to L^2(\Omegasf)\,, \quad
\tilde\Delta_{\Omegasf}^{\Pi,\Theta}u:=\Delta_\Omegasf u\,,$$
\begin{align*}
\D(\tilde\Delta_{\Omegasf}^{\Pi,\Theta}):=
\{u\in L^2(\Omegasf)\,:\,u=u_0+\sigma\ccdot\xi_u\,,\ u_0\in 
\D(\Delta_{\Omegasf}^0)\,,\ \xi_u\in\CO^n_\Pi\,,\ 
\Pi\tau^\V_\Omegasf u_0=\Theta\xi_u\}\,,
\end{align*}
where $(\Pi,\Theta)\in\E(\CO^n)$.
Moreover
\begin{align*}
(-\tilde\Delta_\Omegasf^{\Pi,\Theta}+z)^{-1}
=(-\Delta^F_{\Omegasf}+z)^{-1}+G_z
\Pi\,(\Theta+\Pi\, \tilde\Gamma_z\Pi)^{-1}
\Pi\, G_{\bar z}^*\,,
\end{align*}
where
\begin{align*}
&(\tilde\Gamma_z)_{ij}=\left(z\|\sigma_i\|^2_{L^2(\Omegasf)}-
\langle\sigma_i,\Delta_\Omegasf\sigma_i\rangle_{L^2(\Omegasf)}\right)
\,\delta_{ij}\\
-&\langle(-\Delta_\Omegasf^F+z)^{-1}(-\Delta_\Omegasf+z)\sigma_i,
(-\Delta_\Omegasf+z)
\sigma_j\rangle_{L^2(\Omegasf)}
\end{align*}
\end{theorem}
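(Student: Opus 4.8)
The plan is to deduce this reformulation from Theorem \ref{estensioniOmega2}, using only that $g_k-\sigma_k=(-\Delta_\Omegasf^F)^{-1}\Delta_\Omegasf\sigma_k$ belongs to $\D(\Delta_\Omegasf^F)$ (Lemma \ref{sigma}, point 2). First I would take a generic $u=u_0+g\ccdot\xi_u$ in the domain of a self-adjoint extension, as produced by Theorem \ref{estensioniOmega2}, so that $u_0\in\D(\Delta_\Omegasf^F)$, $\xi_u\in\CO^n_\Pi$ and, in the notation used in that proof, $\Pi\tau^\V_\Omegasf u_0=\tilde\Theta\xi_u$ with $\tilde\Theta=\Theta-\Pi\Lambda\Pi$. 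Via $g=\sigma+(-\Delta_\Omegasf^F)^{-1}\Delta_\Omegasf\sigma$ I would rewrite $u=\tilde u_0+\sigma\ccdot\xi_u$, where $\tilde u_0:=u_0+(-\Delta_\Omegasf^F)^{-1}\Delta_\Omegasf\sigma\ccdot\xi_u\in\D(\Delta_\Omegasf^F)$. I would then record that this decomposition is unique: if $\sigma\ccdot\eta\in\D(\Delta_\Omegasf^F)$ then $g\ccdot\eta=\sigma\ccdot\eta+(-\Delta_\Omegasf^F)^{-1}\Delta_\Omegasf\sigma\ccdot\eta$ lies in $\K((\Delta_\Omegasf^\circ)^*)\cap\D(\Delta_\Omegasf^F)=\K(\Delta_\Omegasf^F)=\{0\}$, so $\eta=0$ by Lemma \ref{sigma}, point 3; conversely any $\tilde u_0+\sigma\ccdot\xi$ with $\tilde u_0\in\D(\Delta_\Omegasf^F)$, $\xi\in\CO^n_\Pi$, equals $(\tilde u_0-(-\Delta_\Omegasf^F)^{-1}\Delta_\Omegasf\sigma\ccdot\xi)+g\ccdot\xi$ with the first summand again in $\D(\Delta_\Omegasf^F)$. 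That $\tilde\Delta_\Omegasf^{\Pi,\Theta}$ acts as $\Delta_\Omegasf$ on its domain is automatic, $(\Delta_\Omegasf^\circ)^*$ being a restriction of $\Delta_\Omegasf^{\max}$.

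The substance lies in translating the boundary condition. From the identity $\tau^\V_\Omegasf(-\Delta_\Omegasf^F)^{-1}h=\langle g,h\rangle_{L^2(\Omegasf)}$ established inside Lemma \ref{G0} I get $\tau^\V_\Omegasf u_0=\tau^\V_\Omegasf\tilde u_0-N\xi_u$, where $N_{ij}:=\langle g_i,\Delta_\Omegasf\sigma_j\rangle_{L^2(\Omegasf)}$, so the constraint $\Pi\tau^\V_\Omegasf u_0=\tilde\Theta\xi_u$ becomes $\Pi\tau^\V_\Omegasf\tilde u_0=(\tilde\Theta+\Pi N\Pi)\xi_u$. The key point is that $N$ differs from the matrix $\Lambda$ of (\ref{lambda}) only along its diagonal: since $\Delta_\Omegasf\sigma_j$ is supported in $\W_j$ while $\sigma_i$ is supported in $\W_i^R$, one has $\langle\sigma_i,\Delta_\Omegasf\sigma_j\rangle_{L^2(\Omegasf)}=0$ for $i\ne j$, hence, writing $g_i=\sigma_i+(-\Delta_\Omegasf^F)^{-1}\Delta_\Omegasf\sigma_i$, $N_{ij}=\langle(-\Delta_\Omegasf^F)^{-1}\Delta_\Omegasf\sigma_i,\Delta_\Omegasf\sigma_j\rangle_{L^2(\Omegasf)}=\Lambda_{ij}$ for $i\ne j$, while $N_{ii}=\langle\sigma_i,\Delta_\Omegasf\sigma_i\rangle_{L^2(\Omegasf)}+\|(-\Delta_\Omegasf^F)^{-\frac12}\Delta_\Omegasf\sigma_i\|^2_{L^2(\Omegasf)}$ is a real number. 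Hence $D:=N-\Lambda$ is a real diagonal matrix, $\tilde\Theta+\Pi N\Pi=\Theta+\Pi D\Pi\in p^{-1}(\Pi)$, and $\Theta\mapsto\Theta+\Pi D\Pi$ is a bijection of $p^{-1}(\Pi)$ onto itself. Setting $\Theta':=\Theta+\Pi D\Pi$, the constraint reads $\Pi\tau^\V_\Omegasf\tilde u_0=\Theta'\xi_u$; together with the first paragraph this identifies, for each $\Pi$, the family of operators described in the statement (now with $\tau^\V_\Omegasf$ applied directly to the regular part $\tilde u_0\in\D(\Delta_\Omegasf^F)$, so that no extension of $\tau^\V_\Omegasf$ is needed) with the family $\{\Delta_\Omegasf^{\Pi,\Theta}\}$ of Theorem \ref{estensioniOmega2}, via $\tilde\Delta_\Omegasf^{\Pi,\Theta'}=\Delta_\Omegasf^{\Pi,\Theta}$.

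For the resolvent I would transport the formula of Theorem \ref{estensioniOmega2}. Since $G_z$ and $G_{\bar z}^*$ depend only on $\sigma$ (Lemma \ref{G0}) they are unchanged, and $\Theta+\Pi\Gamma_z\Pi=\Theta'+\Pi(\Gamma_z-D)\Pi$, so it suffices to observe that $\Gamma_z-D=\tilde\Gamma_z$: the matrices $\Gamma_z$ and $\tilde\Gamma_z$ have the same off-diagonal part and differ only in the coefficient of $\delta_{ij}$, which passes from $z\|\sigma_i\|^2_{L^2(\Omegasf)}+\|(-\Delta_\Omegasf^F)^{-\frac12}\Delta_\Omegasf\sigma_i\|^2_{L^2(\Omegasf)}$ in $\Gamma_z$ to $z\|\sigma_i\|^2_{L^2(\Omegasf)}-\langle\sigma_i,\Delta_\Omegasf\sigma_i\rangle_{L^2(\Omegasf)}$ in $\tilde\Gamma_z$, the difference being exactly $D_{ii}=N_{ii}$. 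Renaming $\Theta'$ to $\Theta$ yields the stated Kre\u\i n formula.

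The step I expect to cost the most care is the middle one: proving that the finite-rank correction $N$ is Hermitian and, above all, that $N-\Lambda$ is \emph{diagonal}. This rests on the pairwise disjointness of the supports of the $\sigma_k$'s together with the same Green's-formula bookkeeping already carried out for Lemma \ref{sigma}, point 4; once that diagonal structure is secured, the correction $D$ simply migrates from the boundary-condition parameter into $\Gamma_z$, turning it into $\tilde\Gamma_z$, and the rest is the routine re-indexing indicated above.
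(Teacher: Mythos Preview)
Your proposal is correct and follows essentially the same route as the paper. The paper starts directly from the abstract Theorem \ref{estensioni} rather than from Theorem \ref{estensioniOmega2}, but the core computation is identical: your matrix $N$ is the paper's $\tilde\Lambda$ (with $\tilde\Lambda_{ij}=\langle\sigma_i,\Delta_\Omegasf\sigma_i\rangle\delta_{ij}+\langle(-\Delta_\Omegasf^F)^{-1}\Delta_\Omegasf\sigma_i,\Delta_\Omegasf\sigma_j\rangle$), your diagonal correction $D=N-\Lambda$ is the paper's $\tilde\Lambda_{ii}\delta_{ij}$, and the identity $\tilde\Gamma_z=\Gamma_z-D$ is exactly the paper's $\tilde\Gamma_{ij}=\Gamma_{ij}-\tilde\Lambda_{ii}\delta_{ij}$, after which both proofs conclude by absorbing the diagonal shift into the parameter $\Theta$.
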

\begin{proof}
By Theorem \ref{estensioni} and Lemma \ref{G0} 
any $u$ in the domain of a self-adjoint extension 
of $\D(\Delta_\Omegasf^\circ)$ is of the kind
$u=\tilde u_0+g\ccdot\xi_u$, $\tilde 
u_0\in\D(\Delta_\Omegasf^F)$, 
$\xi_u\in\CO^n_\Pi$, $\Pi\tau^\V_\Omegasf 
\tilde u_0=\tilde\Theta\xi_u$, 
for some $(\Pi,\tilde\Theta)\in\E(\CO^n)$. By the
definition of $g$ (see Lemma \ref{sigma}), one has 
$u=u_0+\sigma\ccdot\xi_u$, $u_0=\tilde 
u_0+(-{\Delta_\Omegasf^F})^{-1}\Delta_\Omegasf
\sigma\ccdot\xi_u$ and, by Lemma \ref{G0}, 
\begin{align*}
\tau_\Omegasf^\V \tilde u_0=&\tau_\Omegasf^\V  u_0
-\tau_\Omegasf^\V
(-\Delta_\Omegasf^F)^{-1}\Delta_\Omegasf\sigma\ccdot\xi_u
=\tau_\Omegasf^\V u_0-G^*_0\Delta_\Omegasf\sigma\ccdot\xi_u\\
=&\tau_\Omegasf^\V u_0-\tilde\Lambda\xi_u
\,,
\end{align*}
where
\begin{align*}
\tilde\Lambda_{ij}=&
\langle\sigma_i,\Delta_\Omegasf\sigma_i\rangle_{L^2(\Omegasf)}
\,\delta_{ij}+\langle(-\Delta_\Omegasf^F)^{-1}
\Delta_\Omegasf\sigma_i,
\Delta_\Omegasf\sigma_j\rangle_{L^2(\Omegasf)}\\
=&\tilde\Lambda_{ii}\delta_{ij}+\Lambda_{ij}\,.
\end{align*}
By noticing that $\tilde\Gamma_{ij}=
\Gamma_{ij}-\tilde\Lambda_{ii}\delta_{ij}$, 
the proof is then concluded by taking $\tilde\Theta=\Theta
-\Pi \tilde\Lambda\Pi$.
\end{proof}

\begin{remark} {\rm Since both $\sigma_k$ and $g_k$, $1\le k\le n$, 
belong to $\K(\hat\gamma_0)$ (see Lemma \ref{sigma}), 
all the self-adjoint extensions 
of $\Delta_\Omegasf^\circ$ have domains contained in 
$\K(\hat\gamma_0)$, the only one with domain contained 
in $\K(\gamma_0)$ being Friedrichs' 
Laplacian $\Delta_\Omegasf^F$. Thus we can interpret the set 
of all self-adjoint extensions 
of $\Delta_\Omegasf^\circ$ different from $\Delta_\Omegasf^F$ as the set of self-adjoint, 
non-Friedrichs' Dirichlet Laplacians on $L^2(\Omegasf)$.  }
\end{remark}

\begin{remark}{\rm 
Notice that if both $\Pi$ and $\Theta$ are diagonal, 
then both the boundary conditions 
$\Pi\hat \tau_\Omegasf^\V u=\Theta\xi_u$ and $\Pi\tau_\Omegasf^\V u_0=\Theta\xi_u$ appearing in the previous
theorems 
are local, i.e. they do not couple values of $u$ at 
different vertices.  }
\end{remark}

\begin{example}  {\rm The prototypical example is provided by the curvilinear polygon 
$\Omegasf=\W$, where $\W$ denotes the non-convex wedge
$$
\W=\{\x\equiv(r\cos\theta,r\sin\theta)\,:\, 0< r<R\,,\ 0<\theta<\omega\}\,,\quad\omega\in(\pi,2\pi)\,.
$$
In this case $\K({(\Delta_\W^\circ)}^*)$ is one 
dimensional and by Theorem 2.1 $g$ is the unique
(up to the multiplication by a constant) solution of the boundary value problem

\begin{align*}
\begin{cases}
\Delta_\W^{\max}g(r,\theta)=0\,,&\\
g(r,0)=g(r,\omega)=g(R,\theta)=0\,,\quad r\not=0\,.&
\end{cases}
\end{align*}
Thus
$$
g(r,\theta)=\frac{1}{\sqrt{\pi}}\,
\left(\frac{1}{r^{\beta}}-
\frac{r^{\beta}}{R^{2\beta}}\right)\sin\beta\theta\,,\quad\beta=\frac{\pi}{\omega}\,.
$$
Similarly $G_z:\CO\to L^2(\W)$ 
acts as the multiplication by the
function $g_z$ which solves the boundary value problem
\begin{align*}
\begin{cases}
{\Delta_\W^{\max}}g_z(r,\theta)=z\,g_z(r,\theta)\,,&\\
g(r,0)=g(r,\omega)=g(R,\theta)=0\,,\quad r\not=0\,.&
\end{cases}
\end{align*}
Thus
\begin{align*}
&g_z(r,\theta)\\
=&\frac{1}{\sqrt{\pi}}\,\left(\frac{\sqrt z}{2}\right)
^{\beta}\,\Gamma\left(1-\beta\right)\,
\left(J_{-\beta}(\sqrt z \,r)-\frac{J_{-\beta}(\sqrt z \,R)}
{J_{\beta}(\sqrt z \,R)}\,
J_{\beta}(\sqrt z \,r)\right)\sin\beta\theta\,,
\end{align*}
where Re$(\sqrt z)>0$, $\Gamma(x)$ denotes Euler's gamma function at
$x$ and $J_{\nu}$ denotes the Bessel function of order 
$\nu$. Here the constants are chosen in order to have 
$g_z\to g$ as $z\to 0$.
Then
\begin{align*}
&\Gamma_z=z\langle g,g_z\rangle_{L^2(\W)}=
\frac{1}{R^{2\beta}}+\left(\frac{z}{4}\right)^{\beta}
\frac{\Gamma(-\beta)}{\Gamma(\beta)}\,
\frac{J_{-\beta}(\sqrt z \,R)}{J_{\beta}(\sqrt z \,R)}\,.
\end{align*}
By Theorem \ref{estensioniOmega2} the set of 
self-adjoint
extensions of $\Delta^\circ_\W$ different from $\Delta^F_\W$ is
parametrized by $\theta\in\RE$. Any of such extensions 
has resolvent $R_z^\theta$ with kernel
$$
R_z^\theta(\x,\y)=R^F_z(\x,\y)+\left(\theta+
\left(\frac{z}{4}\,\right)^{\beta}
\frac{\Gamma(-\beta)}{\Gamma(\beta)}\,
\frac{J_{-\beta}(\sqrt z \,R)}{J_{\beta}(\sqrt z \,R)}\right)^{-1}
g_z(\x)g_{z}(\y)\,,
$$
where $R^F_z$ denotes the resolvent of the 
Friedrichs-Dirichlet Laplacian. }

\end{example}

\end{section}
\begin{section}{Approximation by Friedrichs-Dirichlet Laplacians with point interactions}
\begin{subsection}{Laplacians with point interactions}
Let $\Delta_{\Omegasf}^F$ be the Friedrichs-Dirichlet Laplacian on
$\Omegasf$ as defined in the previous section and, given the discrete
set $\Y=\{\y_1,\dots, \y_n\}\subset\Omegasf$, we define the 
linear map 
$$
\tau^\Y_\Omegasf:\D(\Delta_{\Omegasf}^F)\to\CO^n\,,\qquad 
(\tau^\Y_\Omega u)_k:=u(\y_k)\,,\quad k=1,\dots,n\,.
$$
By $u(\y_k)=u_\circ(\y_k)+s(\y_k)\ccdot\zeta_u$ and 
(\ref{cacc}) such a linear map is continuous with respect with the
graph norm of $\Delta_{\Omegasf}^F$, 
is evidently surjective and has a
dense (in $L^2(\Omegasf)$) kernel. Thus we can apply the 
results provided in
the Appendix to write down all the self-adjoint extensions of
the symmetric operator $\Delta_{\Y,\Omegasf}^\circ$ 
given by restricting $\Delta_{\Omegasf}^F$ to
the functions that vanish at $\Y$. \par
Let us denote by $g_\Omegasf(z;\cdot,\cdot)$ Green's function of $-\Delta_{\Omegasf}^F+z$, so that
$$
g_\Omegasf(z;\x,\y)=g(z;\x,\y)-h_\Omegasf(z;\x,\y)\,,
$$
where
$$
g(0;\x,\y)=\frac{1}{2\pi}\,\ln\frac{1}{\|\x-\y\|}\,,
$$
\begin{align*}
g(z;\x,\y)=\frac{1}{2\pi}\,K_0(\sqrt z\,\|\x-\y\|)\,,\quad 
\text{\rm Re}(\sqrt z)>0\,,
\end{align*}
$K_0$ the Macdonald (or modified Hankel) function, 
and $h_\Omegasf(z;\cdot,\y)$ 
solves the inhomogeneous Dirichlet boundary value problem
\begin{equation*}
\begin{cases}(-\Delta_\Omegasf^F+z)h_\Omegasf(z;\x,\y)=0\,,
&\quad \x\in \Omegasf\\
h_\Omegasf(z;\x,\y)=g(z;\x,\y)\,,&\quad \x\in \Gammasf\,.
\end{cases}
\end{equation*}
\begin{remark} {\rm One has (see e.g. \cite{[GR]}, formula 8.447.3) 
$$
K_0(\sqrt z\,\|\x-\y\|)=\ln\frac{1}{\|\x-\y\|}
-\ln\frac{\sqrt z}{2}+\psi(1)+o(\sqrt z\,\|\x-\y\|)\,,
$$
where $\psi$ is Euler's psi function.\par 
Since $\Omegasf$ satisfies the exterior cone condition and $g(z;\cdot,\cdot)$ is continuous outside the diagonal, by regularity of solutions of boundary value problems for elliptic equations with continuous boundary data (see e.g. \cite{[K]}, Corollary 7.4.4), one has $h_\Omegasf(z;\cdot,\y)\in C^{\infty}(\Omegasf)\cap C(\bar\Omegasf)$ for any $\y\in\Omegasf$.\par 
By Theorem 21 in \cite{[Sw]} one has
\begin{equation}\label{Sweers}
\frac{1}{c}\,g_\Omegasf(0;\x,\y)\le\ln\left(1+w(\x,\y)\,
\frac{u_1(\x)\,u_1(\y)}{\|\x-\y\|^2}\right)\le
 c\,g_\Omegasf(0;\x,\y)\,,
\end{equation}
where 
\begin{equation*}
w(\x,\y):=\begin{cases}
\min\left(\left(\frac{d(\x)}{u_1(\x)}\right)^2,
\left(\frac{d(\y)}{u_1(\y)}\right)^2
\right)\,,& \hat d(\x)<R\,,\ \hat d(\y)<R\,,\\
\max\left(\left(\frac{d(\x)}{u_1(\x)}\right)^2,
\left(\frac{d(\y)}{u_1(\y)}\right)^2
\right)\,,& \check d(\x)<R\,,\ \check d(\y)<R\,,\\
1\,,&{\text{\rm otherwise}}\,,
\end{cases} 
\end{equation*}
$u_1$ is the first eigenfunction of $-\Delta^F_\Omegasf$, 
$d$ is the 
distance form the boundary, $\hat d$ the distance
from the set of the vertices at the convex corners, $\check d$ is the distance
from the set of the vertices at the non-convex corners, and $2R$ is  smaller than the distance between any couple of vertices.}
\end{remark}
Defining
$$
G^\Y_z:\CO^n\to L^2(\Omegasf)\,,\qquad 
(G^\Y_z)^*:L^2(\Omegasf)\to\CO^n \,,\qquad z\in\rho(\Delta_\Omegasf^F)
$$
by
$$
G^\Y_z:=\left(\tau^\Y_\Omegasf(-\Delta_\Omegasf^F+\bar z)^{-1}\right)^*
\,,
$$
one has
$$
(G^\Y_z\xi)(\x)=\sum_{i=1}^{n}g_\Omegasf(z;\x,\y_i)\,\xi_i
$$
and
$$
((G^\Y_z)^*u)_k=\langle g_\Omegasf(z;\cdot,\y_k),u
\rangle_{L^2(\Omegasf)}\,.
$$
By the results provided in the Appendix one obtains 
the following 
\begin{theorem}\label{estensioniY2}
Any self-adjoint extension of $\Delta^\circ_{\Y,\Omegasf}$ 
is of the kind
$$
\Delta_{\Y,\Omegasf}^{\Pi,\Theta}:\D(\Delta_{\Y,\Omegasf}^{\Pi,\Theta})
\subset L^2(\Omegasf)\to L^2(\Omegasf)\,, \quad
\Delta_{\Y,\Omegasf}^{\Pi,\Theta}u:=\Delta_\Omegasf ^Fu_0\,,
$$
\begin{align*}
\D(\Delta_{\Y,\Omegasf}^{\Pi,\Theta}):=
\{u\in L^2(\Omegasf)\,:\,
u=u_0+G_0^\Y\xi_u\,,\ u_0\in\D(\Delta^F_{\Omegasf})\,,\ 
\xi_u\in\CO^n_\Pi\,,\ \Pi\hat\tau_\Omegasf^\Y
u=\Theta\xi_u\},
\end{align*}
where $(\Pi,\Theta)\in\E(\CO^n)$ and
$$
\left(\hat\tau_\Omegasf^\Y u\right)_k:=
\lim_{\x\to\y_k}\,\left(u(\x)-\frac{(\xi_u)_k}{2\pi}\,
g_\Omegasf(0;\x,\y_k)\right)\,.
$$
Moreover
\begin{align*}
(-\Delta_{\Y,\Omegasf}^{\Pi,\Theta}+z)^{-1}
=(-\Delta^F_{\Omegasf}+z)^{-1}+G^\Y_z
\Pi\,(\Theta+\Pi\,\Gamma^\Y_z\Pi)^{-1}
\Pi\, (G^\Y_{\bar z})^*\,,
\end{align*}
where
\begin{align*}
\left(\Gamma_z^\Y\right)_{ij}:=&\left(
\frac{1}{2\pi}\,\left(\ln\left(\frac{\sqrt
  z}{2}\right)-\psi(1)\right)-h_\Omegasf(0;\y_i,\y_i)
+h_\Omegasf(z;\y_i,\y_i) \right)
\,\delta_{ij}\\
&-g_\Omegasf(z;\y_i,\y_j)\,(1-\delta_{ij})\,.
\end{align*}
\end{theorem}
\begin{proof}
By Theorem \ref{estensioni}  
any $u=u_0+G_0^\Y\ccdot\xi_u$ 
in the domain of a self-adjoint extension 
of $\D(\Delta_{\Y,\Omegasf}^\circ)$ has to satisfy the 
boundary conditions $\Pi\tau^\V_\Omegasf 
u_0=\tilde\Theta\xi_u$, 
for some $(\Pi,\tilde\Theta)\in\E(\CO^n)$. Since
\begin{align*}
(\tau_\Omegasf^\Y u_0)_i
=\lim_{\x\to\y_i}\,\left(u(\x)-g(0;\x,\y_i)\,(\xi_u)_i\right)
-\sum_{j\not=i}g_\Omegasf (0;\y_i,\y_j)\,(\xi_u)_j\,,
\end{align*} 
one has
$$
\tau_\Omegasf^\Y u_0=\hat\tau_\Omegasf^\Y u-\Lambda^\Y\xi_u\,,
$$
where 
\begin{equation}\label{lambdaY}
\Lambda^\Y_{ij}:=g_\Omegasf(0;\y_i,\y_j)\,(1-\delta_{ij})
\,.
\end{equation} 
Moreover
\begin{align*}
&\left(z(G^\Y_0)^*G^\Y_z\right)_{ij}=\left(\tau^\Y_\Omegasf(G^\Y_0-G^\Y_z)\right)_{ij}\\
=&\left(\lim_{\x\to\y_i}\,
\left(g(0;\x,\y_i)-g(z;\x,\y_i)\right)-h_\Omegasf(0;\y_i,\y_i)+
h_\Omegasf(z;\y_i,\y_i)\right)\,\delta_{ij}\\
&+
\left(g_\Omegasf(0;\y_i,\y_j)-g_\Omegasf(z;\y_i,\y_j)\right)\,(1-\delta_{ij})\\
=&\Lambda^\Y_{ij}+\left(\Gamma^\Y_z\right)_{ij}
\,,
\end{align*}
Thus the proof is concluded by posing 
$\tilde\Theta=\Theta-\Pi\Lambda^\Y\Pi$.
\end{proof}
By the decomposition $g_\Omegasf=g+h_\Omegasf$ the previous theorem admits (the proof being of the same kind) an alternative version which provides results analogous to the ones given in \cite{[BFM]} and \cite{[EM]}. 
\begin{theorem}\label{estensioniY}
Any self-adjoint extension of $\Delta^\circ_{\Y,\Omegasf}$ 
is of the kind
$$
\check\Delta_{\Y,\Omegasf}^{\Pi,\Theta}:\D(\check\Delta_{\Y,\Omegasf}^{\Pi,\Theta})
\subset L^2(\Omegasf)\to L^2(\Omegasf)\,, \quad
\check\Delta_{\Y,\Omegasf}^{\Pi,\Theta}u:=\Delta_\Omegasf ^Fu_0\,,
$$
\begin{align*}
\D(\check\Delta_{\Y,\Omegasf}^{\Pi,\Theta}):=
\{u\in L^2(\Omegasf)\,:\,
u=u_0+G_0^\Y\xi_u\,,\ u_0\in\D(\Delta^F_{\Omegasf})\,,\ 
\xi_u\in\CO^n_\Pi\,,\ \Pi\check\tau_\Omegasf^\Y
u=\Theta\xi_u\},
\end{align*}
where $(\Pi,\Theta)\in\E(\CO^n)$ and
$$
\left(\check\tau_\Omegasf^\Y u\right)_k:=
\lim_{\x\to\y_k}\,\left(u(\x)-\frac{(\xi_u)_k}{2\pi}\,
\ln\frac{1}{\|\x-\y_k\|}\right)\,.
$$
Moreover
\begin{align*}
(-\check\Delta_{\Y,\Omegasf}^{\Pi,\Theta}+z)^{-1}
=(-\Delta^F_{\Omegasf}+z)^{-1}+G^\Y_z
\Pi\,(\Theta+\Pi\,\check\Gamma^\Y_z\Pi)^{-1}
\Pi\, (G^\Y_{\bar z})^*\,,
\end{align*}
where
\begin{align*}
\left(\check\Gamma_z^\Y\right)_{ij}:=&\left(
\frac{1}{2\pi}\,\left(\ln\left(\frac{\sqrt
  z}{2}\right)-\psi(1)\right)
+h_\Omegasf(z;\y_i,\y_i) \right)
\,\delta_{ij}\\
&-g_\Omegasf(z;\y_i,\y_j)\,(1-\delta_{ij})\,.
\end{align*}
\end{theorem}
\begin{remark}{\rm
Notice that if both $\Pi$ and $\Theta$ are diagonal, 
then  both the boundary conditions 
$\Pi\hat\tau_\Omegasf^\Y u=\Theta\xi_u$  and  $\Pi\check\tau_\Omegasf^\Y u=\Theta\xi_u$
are local, i.e. they do not couple values of $u$ at 
different points of $\Y$.  }
\end{remark}
\end{subsection}
\begin{subsection}{Approximating non-Friedrichs Dirichlet Laplacians
    by point perturbations.}
Let $\{\Y_N\}_1^\infty$ denote a sequence of discrete sets  $\Y_N=\{\y_k^N\}_1^n\subset\Omegasf$ such that, for any $1\le k\le n$, $$
\y_k^N\equiv(r_k^N\cos\theta_k^N,
r_k^N\sin\theta_k^N)\in\W^{R/3}_k\,,
$$
$$
\inf_{N}\,\sin\beta_k\theta_k^N=c>0
$$
and
$$
\lim_{N\uparrow \infty}\y_k^N=\ve_k\,.
$$  
Posing
$$
\tilde\tau_\Omegasf^{\Y_N}:\D({\Delta_\Omegasf^F})\to\CO^n\,,\quad
\left(\tilde\tau_\Omegasf^{\Y_N} u\right)_k:= 
\frac{u(\y_k^N)}{s_k(\y_k^N)}\,,
$$ 
i.e.
\begin{equation}\label{MN}
\tau_\Omegasf^{\Y_N}=M_N\,\tilde\tau_\Omegasf^{\Y_N}\,,\quad
(M_{N})_{ij}:=s_i(\y^N_i)\,\delta_{ij}\,,
\end{equation}
one has the following 
\begin{lemma}\label{tauN} 
There exist $c>0$ and 
$0<\alpha_k<1-\beta_k$ such that
$$
\left|\left(\tilde\tau^{\Y_N}_\Omegasf u-\tau^\V_\Omegasf u\right)_k\right|\le
c\,\|\y_k^N-\ve_k\|^{\alpha_k}
\| u\|_{\Delta^F_\Omegasf}\,.
$$
\end{lemma}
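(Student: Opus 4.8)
The plan is to reduce the $k$-th component of $\widetilde\tau^{\Y_N}_\Omegasf u-\tau^\V_\Omegasf u$ to the value at $\y_k^N$ of the ``$H^2$-part'' of $u$, and then to bound numerator and denominator separately. First I would apply Theorem \ref{domF} to write $u=u_\circ+s\ccdot\zeta_u$, with $u_\circ\in\D(\Delta^\circ_\Omegasf)=H^2(\Omegasf)\cap H^1_0(\Omegasf)$ and $\zeta_u\in\CO^n$, and recall from the proof of Lemma \ref{tau} that $\tau^\V_\Omegasf u=\zeta_u$. Since $\y_k^N\in\W^{R/3}_k$, where the cut-off $f$ equals $1$, and since the supports of the $s_j$ lie in the pairwise disjoint wedges $\W^R_j$, we get $s_j(\y_k^N)=0$ for $j\ne k$ and $s_k(\y_k^N)=u_k^+(\y_k^N)=\tfrac{1}{\sqrt\pi}(r_k^N)^{\beta_k}\sin\beta_k\theta_k^N>0$, whence $u(\y_k^N)=u_\circ(\y_k^N)+s_k(\y_k^N)(\zeta_u)_k$ and
\begin{equation*}
\left(\widetilde\tau^{\Y_N}_\Omegasf u-\tau^\V_\Omegasf u\right)_k
=\frac{u(\y_k^N)}{s_k(\y_k^N)}-(\zeta_u)_k
=\frac{u_\circ(\y_k^N)}{s_k(\y_k^N)}\,.
\end{equation*}

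To bound the numerator I would use that $u_\circ\in H^2(\Omegasf)\cap H^1_0(\Omegasf)$ is continuous up to $\bar\Omegasf$ with vanishing boundary trace, so $u_\circ(\ve_k)=0$, and then invoke the H\"older estimate (\ref{emb2}) with a fixed exponent $\alpha\in(0,1)$ to get $|u_\circ(\y_k^N)|=|u_\circ(\y_k^N)-u_\circ(\ve_k)|\le c\,\|u_\circ\|_{H^2(\Omegasf)}\|\y_k^N-\ve_k\|^{\alpha}$. The Caccioppoli estimate (\ref{cacc}) together with the density of $C_0^\infty(\Omegasf)$ in $H^2(\Omegasf)\cap H^1_0(\Omegasf)$ gives $\|u_\circ\|_{H^2(\Omegasf)}\le c\,\|\Delta_\Omegasf u_\circ\|_{L^2(\Omegasf)}$; and since the proof of Theorem \ref{domF} exhibits $L^2(\Omegasf)$ as the direct sum of the closed subspaces $\R(\Delta^\circ_\Omegasf)$ and $\mathrm{span}\{\Delta^F_\Omegasf s_1,\dots,\Delta^F_\Omegasf s_n\}$, the associated projection is bounded, so $\|\Delta_\Omegasf u_\circ\|_{L^2(\Omegasf)}=\|\Delta^\circ_\Omegasf u_\circ\|_{L^2(\Omegasf)}\le c\,\|\Delta^F_\Omegasf u\|_{L^2(\Omegasf)}\le c\,\|u\|_{\Delta^F_\Omegasf}$, with $c$ depending only on $\Omegasf$ and $\alpha$. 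For the denominator, $r_k^N=\|\y_k^N-\ve_k\|$ and the standing hypothesis $\inf_N\sin\beta_k\theta_k^N=c>0$ yield $s_k(\y_k^N)\ge\tfrac{c}{\sqrt\pi}\,\|\y_k^N-\ve_k\|^{\beta_k}$.

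Putting the two estimates together gives $\big|(\widetilde\tau^{\Y_N}_\Omegasf u-\tau^\V_\Omegasf u)_k\big|\le c\,\|\y_k^N-\ve_k\|^{\alpha-\beta_k}\|u\|_{\Delta^F_\Omegasf}$. Since $\beta_k=\pi/\omega_k<1$, the interval $(\beta_k,1)$ is non-empty, so I would simply fix $\alpha\in(\beta_k,1)$ (or a common $\alpha\in(\max_k\beta_k,1)$) from the outset and set $\alpha_k:=\alpha-\beta_k\in(0,1-\beta_k)$, which is exactly the asserted bound. I do not expect a genuine obstacle: all the analytic input (Theorem \ref{domF}, Lemma \ref{tau}, the embedding (\ref{emb2}), the Caccioppoli bound (\ref{cacc})) is already available. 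The only step that deserves care is making the decomposition $u=u_\circ+s\ccdot\zeta_u$ \emph{quantitatively} stable, i.e.\ upgrading the existence/uniqueness statement of Theorem \ref{domF} to the linear bound $\|\Delta^\circ_\Omegasf u_\circ\|_{L^2(\Omegasf)}\le c\,\|u\|_{\Delta^F_\Omegasf}$ via boundedness of the projection onto a closed summand, so that the final constant is uniform in $u$ and in $N$.
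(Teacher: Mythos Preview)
Your argument is correct and follows essentially the same route as the paper: decompose $u=u_\circ+s\ccdot\zeta_u$ via Theorem \ref{domF}, use $\tau^\V_\Omegasf u=\zeta_u$ from Lemma \ref{tau} to reduce the difference to $u_\circ(\y_k^N)/s_k(\y_k^N)$, bound the numerator by (\ref{emb2}) and (\ref{cacc}) and the denominator by the hypothesis $\inf_N\sin\beta_k\theta_k^N>0$, and pick $\alpha\in(\beta_k,1)$. The only cosmetic difference is in the ``quantitative stability'' step: the paper invokes the continuous projection $Q_\circ:\D(\Delta^F_\Omegasf)\to\D(\Delta^\circ_\Omegasf)$ (already set up in the proof of Lemma \ref{tau}) to get $\|u_\circ\|_{\Delta^F_\Omegasf}\le c\|u\|_{\Delta^F_\Omegasf}$, whereas you obtain the equivalent bound by projecting $\Delta^F_\Omegasf u$ in $L^2(\Omegasf)=\R(\Delta^\circ_\Omegasf)\oplus\mathrm{span}\{\Delta^F_\Omegasf s_k\}$; both are valid.
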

\begin{proof} 
By 
$$
u(\y^N_k)=u_\circ(\y^N_k)+s(\y^N_k)\ccdot\zeta_u=
u_\circ(\y^N_k)+s_k(\y^N_k)(\zeta_u)_k
\,,
$$ 
by $\tau^\V_\Omega u=\zeta_u$, 
by (\ref{emb2}) and (\ref{cacc}), and denoting by $Q_\circ$ the continuous 
projection 
$Q_\circ:\D(\Delta_\Omegasf^F)\to\D(\Delta_\Omegasf^\circ)$, one 
obtains 
\begin{align*}
&\left|\left(\tilde
\tau^{\Y_N}_\Omegasf u-\tau^{\V}_\Omegasf u\right)_k\right|=
\left|\frac{u_\circ(\y_k^N)}{s_k(\y_k^N)}\right|
\le \frac{1}{c} \,\frac{|u_\circ(\y_k^N)|}{\|\y_k^N-\ve_k\|^
{\beta_k}}\\
\le&\frac{1}{c}\,\|\y_k^N-\ve_k\|^{\alpha-\beta_k}\|u_\circ\|_{H^2(\Omegasf)}
\le{c}\,\|\y_k^N-\ve_k\|^{\alpha-\beta_k}
\|\Delta_\Omegasf u_\circ\|_{L^2(\Omegasf)}\\
\le &{c}\,\|\y_k^N-\ve_k\|^{\alpha-\beta_k}
\|Q_\circ u\|_{\Delta^F_\Omegasf}
\le c\,\|\y_k^N-\ve_k\|^{\alpha-\beta_k}
\| u\|_{\Delta^F_\Omegasf}\,.
\end{align*}
\end{proof}
In conclusion we get the following 
\begin{theorem}\label{convergence}
Let $(\Pi,\Theta)\in\E(\CO^n)$ and define 
$(\Pi_N,\Theta_N)\in\E(\CO^n)$ by
$$
\CO^n_{\Pi_N}=M_N^{-1}(\CO^n_{\Pi})\,,\qquad
\Theta_N=\Pi_NM_N\Theta M_N\Pi_N\,.
$$
Then 
$\Delta_{\Y_N,\Omegasf}^{\Pi_N,\Theta_N}$ converges in norm resolvent
sense to 
$\Delta_{\Omegasf}^{\Pi,\Theta}$ as $N\uparrow\infty$. 
\end{theorem}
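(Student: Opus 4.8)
We shall reduce the statement to the convergence, already established in Lemma \ref{tauN}, of the renormalized boundary maps $\widetilde\tau_\Omegasf^{\Y_N}$ towards $\tau_\Omegasf^\V$. The first step is a change of parametrization in the abstract Kre\u\i n machinery. By (\ref{MN}), $\tau_\Omegasf^{\Y_N}=M_N\widetilde\tau_\Omegasf^{\Y_N}$ with $M_N$ a real, invertible, diagonal matrix; hence $\K(\tau_\Omegasf^{\Y_N})=\K(\widetilde\tau_\Omegasf^{\Y_N})$, so the underlying symmetric operator $\Delta_{\Y_N,\Omegasf}^\circ$ is unaffected by passing from $\tau_\Omegasf^{\Y_N}$ to $\widetilde\tau_\Omegasf^{\Y_N}$, and, setting $\widetilde G^{\Y_N}_z:=(\widetilde\tau_\Omegasf^{\Y_N}(-\Delta_\Omegasf^F+\bar z)^{-1})^*$, one has $\widetilde G^{\Y_N}_z=G^{\Y_N}_z M_N^{-1}$, i.e. $G^{\Y_N}_z\xi=\widetilde G^{\Y_N}_z(M_N\xi)$.

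I would then check, directly on the domain description of Theorem \ref{estensioniY2}, that the substitution $\eta:=M_N\xi$ identifies $\Delta_{\Y_N,\Omegasf}^{\Pi_N,\Theta_N}$ with the self-adjoint extension of $\Delta_{\Y_N,\Omegasf}^\circ$ determined, relative to the boundary map $\widetilde\tau_\Omegasf^{\Y_N}$, by the \emph{$N$-independent} parameter $(\Pi,\Theta)$. Indeed $\xi\in\CO^n_{\Pi_N}=M_N^{-1}\CO^n_\Pi$ is equivalent to $\eta\in\CO^n_\Pi$; and, using $M_N^*=M_N$ together with $(M_N^{-1}\CO^n_\Pi)^\perp=M_N(\CO^n_\Pi)^\perp$, the boundary condition $\Pi_N\tau_\Omegasf^{\Y_N}u_0=(\Theta_N-\Pi_N\Lambda^{\Y_N}\Pi_N)\xi$ coming from the proof of Theorem \ref{estensioniY2}, combined with $\Theta_N=\Pi_N M_N\Theta M_N\Pi_N$, becomes after left multiplication by $M_N^{-1}$ exactly $\Pi\,\widetilde\tau_\Omegasf^{\Y_N}u_0=(\Theta-\Pi\,\widetilde\Lambda^{\Y_N}\Pi)\eta$, where $\widetilde\Lambda^{\Y_N}=M_N^{-1}\Lambda^{\Y_N}M_N^{-1}$ is the $\Lambda$-matrix attached to $\widetilde\tau_\Omegasf^{\Y_N}$. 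Consequently
\begin{align*}
(-\Delta_{\Y_N,\Omegasf}^{\Pi_N,\Theta_N}+z)^{-1}
=(-\Delta^F_{\Omegasf}+z)^{-1}+\widetilde G^{\Y_N}_z\,\Pi\,(\Theta+\Pi\,\widetilde\Gamma^{\Y_N}_z\Pi)^{-1}\,\Pi\,(\widetilde G^{\Y_N}_{\bar z})^*\,,
\end{align*}
where $\widetilde\Gamma^{\Y_N}_z=M_N^{-1}\Gamma^{\Y_N}_z M_N^{-1}$ is the Weyl matrix attached to $\widetilde\tau_\Omegasf^{\Y_N}$ (computed, as in Theorem \ref{estensioniOmega2}, from the relations (\ref{GR1})--(\ref{gamma1})).

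Next I would pass to the limit. By Lemma \ref{tauN}, $\widetilde\tau_\Omegasf^{\Y_N}\to\tau_\Omegasf^\V$ in the operator norm of $\mathcal B(\D(\Delta_\Omegasf^F),\CO^n)$, $\D(\Delta_\Omegasf^F)$ carrying the graph norm, since $\y_k^N\to\ve_k$. Composing on the right with the bounded operator $(-\Delta_\Omegasf^F+\bar z)^{-1}\colon L^2(\Omegasf)\to\D(\Delta_\Omegasf^F)$ gives $\widetilde G^{\Y_N}_z\to G_z$ and $(\widetilde G^{\Y_N}_{\bar z})^*\to G_{\bar z}^*$ in operator norm, and, since $\widetilde\Gamma^{\Y_N}_z$ is built from $\widetilde\tau_\Omegasf^{\Y_N}$, $\widetilde G^{\Y_N}_0$ and $\widetilde G^{\Y_N}_z$ through the continuous operations of (\ref{GR1})--(\ref{gamma1}), also $\widetilde\Gamma^{\Y_N}_z\to\Gamma_z$ in $\mathcal B(\CO^n)$. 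Fix now $z$ with $\Im z\ne0$, so that $z\in\rho(\Delta_\Omegasf^{\Pi,\Theta})$ and $\Theta+\Pi\Gamma_z\Pi$ is invertible on $\CO^n_\Pi$; by the convergence just established, $\Theta+\Pi\widetilde\Gamma^{\Y_N}_z\Pi$ is invertible for all large $N$, with inverses converging in norm. Subtracting the common term $(-\Delta_\Omegasf^F+z)^{-1}$ from the Kre\u\i n formula above and from the one of Theorem \ref{estensioniOmega2} for $\Delta_\Omegasf^{\Pi,\Theta}$, and using that a product of norm-convergent bounded operators, one factor of which is uniformly bounded, converges in norm, one obtains $(-\Delta_{\Y_N,\Omegasf}^{\Pi_N,\Theta_N}+z)^{-1}\to(-\Delta_\Omegasf^{\Pi,\Theta}+z)^{-1}$ in operator norm. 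Convergence for a single non-real $z$ yields norm resolvent convergence, which is the assertion.

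The only genuinely non-routine point, and the main obstacle, is the reduction carried out in the first two paragraphs: one must see that the divergences carried \emph{separately} by $G^{\Y_N}_z$ and $\Gamma^{\Y_N}_z$ as $\y_k^N\to\ve_k$ --- due to the diagonal singularity of the Green's function and to $h_\Omegasf(z;\y_k^N,\y_k^N)$ blowing up because $\y_k^N$ approaches $\Gammasf$ --- are precisely absorbed by the vanishing factors $s_k(\y_k^N)=\tfrac{1}{\sqrt\pi}\,(r_k^N)^{\beta_k}\sin\beta_k\theta_k^N$ in $M_N^{-1}$, so that $\widetilde G^{\Y_N}_z$ and $\widetilde\Gamma^{\Y_N}_z$ not only stay bounded but converge. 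Boundedness of $G^{\Y_N}_z M_N^{-1}$ ultimately rests on the two-sided estimate (\ref{Sweers}) for the Dirichlet Green's function and on the $\|\x-\ve_k\|^{\beta_k}$-behaviour of the functions in $\D(\Delta_\Omegasf^F)$ near the non-convex vertices; more efficiently, it is exactly the content of the quantitative bound in Lemma \ref{tauN}, which is the heart of the whole argument. The remaining ingredients --- the two Kre\u\i n formulas, the invertibility at non-real $z$, the continuity of matrix inversion --- are already in place.
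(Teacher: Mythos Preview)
Your overall strategy coincides with the paper's: pass from $\tau_\Omegasf^{\Y_N}$ to the renormalized map $\widetilde\tau_\Omegasf^{\Y_N}$ via $M_N$ (this is exactly Lemma \ref{tauM}, which yields $\Delta_{\Y_N,\Omegasf}^{\Pi_N,\Theta_N}=\tilde\Delta_{\Y_N,\Omegasf}^{\Pi,\Theta}$), and then exploit the operator-norm convergence $\widetilde\tau_\Omegasf^{\Y_N}\to\tau_\Omegasf^\V$ from Lemma \ref{tauN}. The reparametrization paragraph and the passage from convergence of $\widetilde\tau$ to convergence of $\widetilde G_z^{\Y_N}$ and of the full Weyl matrix $z(\widetilde G_0^{\Y_N})^*\widetilde G_z^{\Y_N}$ are correct.

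There is, however, a genuine gap in your third paragraph. The identities (\ref{GR1})--(\ref{gamma1}) only deliver the convergence of the \emph{full} abstract Weyl function $z(\widetilde G_0^{\Y_N})^*\widetilde G_z^{\Y_N}\to zG_0^*G_z$; they do \emph{not} give $\widetilde\Gamma^{\Y_N}_z\to\Gamma_z$ separately. From the proofs of Theorems \ref{estensioniOmega2} and \ref{estensioniY2} one has $zG_0^*G_z=\Lambda+\Gamma_z$ and $z(G_0^{\Y_N})^*G_z^{\Y_N}=\Lambda^{\Y_N}+\Gamma^{\Y_N}_z$, so $\widetilde\Gamma^{\Y_N}_z\to\Gamma_z$ is \emph{equivalent} to $M_N^{-1}\Lambda^{\Y_N}M_N^{-1}\to\Lambda$, i.e.\ for $i\ne j$,
\[
\frac{g_\Omegasf(0;\y_i^N,\y_j^N)}{s_i(\y_i^N)\,s_j(\y_j^N)}\ \longrightarrow\ \Lambda_{ij}\,.
\]
This does not follow from Lemma \ref{tauN}: the bound there holds only for $u\in\D(\Delta_\Omegasf^F)$, whereas $\widetilde G_0^{\Y_N}e_j=g_\Omegasf(0;\cdot,\y_j^N)/s_j(\y_j^N)$ lies outside that domain, so you cannot simply compose the convergences $(\widetilde\tau^{\Y_N})_i\to(\tau^\V)_i$ and $\widetilde G_0^{\Y_N}e_j\to g_j$. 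The paper fills this gap explicitly: it rewrites $\Lambda_{ij}$ as the iterated limit of $g_\Omegasf(0;\y_i^{N'},\y_j^N)/(s_i(\y_i^{N'})s_j(\y_j^N))$ and then uses the two-sided Green's function estimate (\ref{Sweers}) (with $w\equiv 1$, since $\y_i^{N'}$ and $\y_j^N$ are near distinct non-convex vertices) to replace the iterated limit by the diagonal one $N'=N$. You do invoke (\ref{Sweers}), but for the boundedness of $\widetilde G^{\Y_N}_z$, where Lemma \ref{tauN} already suffices; its indispensable role is precisely the convergence of the off-diagonal $\widetilde\Lambda^{\Y_N}$ (equivalently, of your $\tilde\Theta_N$ to $\tilde\Theta$), which is the step your argument skips.
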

\begin{proof}
Let $\tilde \Delta_{\Y^N,\Omegasf}^{\Pi,\Theta}$ be the self-adjoint operator obtained by proceeding as in the proof of Theorem \ref{estensioniY2} with $\tau_\Omegasf^{\Y_N}$ replaced by $\tilde \tau_\Omegasf^{\Y_N}$. Then $\tilde \Delta_{\Y^N,\Omegasf}^{\Pi,\Theta}=A_N^{\Pi,\tilde\Theta_N}$, 
where $\tilde\Theta_N=\Theta-\Pi M_N^{-1}\Lambda^{\Y_N}
M_N^{-1}\Pi$, $\Lambda^{\Y_N}$ is  defined in (\ref{lambdaY}) and the operator sequence $A_N^{\Pi,\tilde\Theta_N}$ is given by Theorem \ref{estensioni} with $A=\Delta_\Omegasf^F$ and $\tau=\tilde\tau^{\Y_N}_\Omegasf$. Analogously (see the proof of Theorem \ref{estensioniOmega2}) $ \Delta_{\Omegasf}^{\Pi,\Theta}=A^{\Pi,\tilde\Theta}$, where $\tilde\Theta=\Theta-\Pi\Lambda\Pi$, $\Lambda$ is  defined in (\ref{lambda}), and $A^{\Pi,\tilde\Theta}$ is given by \ref{estensioni} with $A=\Delta_\Omegasf^F$ and $\tau=\tau^{\V}_\Omegasf$.
\par
Let us now pose $\tilde\chi_i^R:=(2+\beta_i)\pi^{1/2}\chi_i^R/2\beta_iR^{2+\beta_i}$, 
where $\chi_i^R$ denotes the characteristic function of the 
wedge $\W^R_i$. Thus, by the definition of $\tau^\V_\Omega$ and by Lemma \ref{tauN}, 
$$
(\tau^\V_\Omegasf u)_i=\lim_{R\downarrow 0}\,
\langle\tilde\chi_i^{R},u\rangle_{L^2(\Omegasf)}=\lim_{N\uparrow\infty}\,\frac{u(\y_i^N)}{s_i(\y^N_i)}\,.
$$
Then, for any $i\not=j$, by Lemma \ref{G0} and by (\ref{Sweers}) with $w(\y^{N'}_i,\y^N_j)=1$, one has
\begin{align*}
\Lambda_{ij}=&\left(\tau_\Omegasf^\V(g_j)\right)_i=
\lim_{R\downarrow 0}\,\langle\tilde\chi_i^{R},g_j\rangle_{L^2(\Omegasf)}=
\lim_{R\downarrow 0}\,
\left(G_0^*\tilde\chi_i^R\right)_j
\\
=&\lim_{R'\downarrow 0}\,\lim_{R\downarrow 0}\,
\langle\tilde\chi_j^{R'},(-\Delta^F_{\Omegasf})
\tilde\chi_i^R\rangle_{L^2(\Omegasf)}\\
=&
\lim_{R'\downarrow 0}\,\lim_{R\downarrow 0}\,
\int_{\Omegasf\times\Omegasf}\tilde\chi^{R'}_j(\x)g_\Omegasf(0;\x,\y)\tilde\chi^{R}_i(\y)
\,d\x\, d\y\\
=&\lim_{N'\uparrow\infty}\lim_{N\uparrow\infty}\,\frac{g_\Omegasf(0;\y^{N'}_i,\y^N_j)}
{{s_i(\y_i^{N'})}{s_j(\y_j^N)}}\\
=&\lim_{N\uparrow\infty}\,\frac{g_\Omegasf(0;\y^{N}_i,\y^N_j)}
{{s_i(\y_i^{N})}{s_j(\y_j^N)}}\,.
\end{align*}
Thus
$$
\lim_{N\uparrow\infty}\,\frac{\Lambda_{ij}^{\Y_N}}
{{s_i(\y_i^N)}{s_j(\y_j^N)}}=\Lambda_{ij}\,,
$$
i.e. $\tilde\Theta_N$ converges to $\tilde\Theta$. 
Therefore, by Lemma \ref{tauN} and Lemma \ref{taun}, 
$A_N^{\Pi,\tilde\Theta_N}$ converges in norm resolvent sense to 
$A^{\Pi,\tilde\Theta}$, i.e. $\tilde\Delta_{\Y_N,\Omegasf}^{\Pi,\Theta}$ 
converges in norm resolvent
sense to 
$\Delta_{\Omegasf}^{\Pi,\Theta}$.
 The proof is
then concluded by noticing that, by Lemma \ref{tauM}, 
$\tilde\Delta_{\Y_N,\Omegasf}^{\Pi,\Theta}=\Delta_{\Y_N,\Omegasf}
^{\Pi_N,\Theta_N}$. 
\end{proof}
\end{subsection}
\end{section}
\begin{section}{Appendix}
For reader's convenience in this section we collect some
results about the self-adjoint extensions of a closed symmetric operator $S$ with deficiency indices $(n,n)$. Since it suffices for the purposes of this paper we suppose $n<+\infty$ and $-S>0$. For the general case, as well as for the connection with alternative approaches,  we refer to \cite{[Po4]} and references therein.      
\par
Let  
$$
S:\D(S)\subseteq\H\to\H
$$
be a closed symmetric linear operator  on the Hilbert space $\H$
such that $-S>0$. Then by Friedrichs' theorem $S$ 
has a self-adjoint extension $$A:\D(A)\subseteq\H\to\H$$ 
with the same bound. \par 
Suppose now that $S$ has finite deficiency indices $n=n_\pm=$dim$\K_\pm>0$, $\K_\pm:=\K(-S^*\pm i)$. 
By von Neumann's theory of
self-adjoint extensions, there exists an unitary operator $U_A:\K_+\to\K_-$ such that 
$$
\D(A)=\D(S)\oplus_A\G(U_A)\,,
$$
$$
A(\phi+ (\uno+U_A)\phi_+)=S\phi+i\,(\uno-U_A)\phi_+\,,
$$
where $\G(U_A)$ is the graph of $U_A$ and $\oplus_{A}$ denotes the orthogonal sum corresponding to the scalar product inducing the graph norm on $\D(A)$.
Therefore $S=A|\K(P)$, 
where $P:\D(A)\to \K_+$ 
denotes the orthogonal projection onto $\K_+\simeq\CO^n$.
Thus, since this gives some advantages in applications, we will look for the
self-adjoint extensions of $S$ by considering the equivalent 
problem of the search of the self-adjoint extensions of the
restriction of  $A$ to the kernel, 
which we suppose to coincide with $\D(S)$, 
of a surjective bounded linear operator
$$
\tau:\D(A)\to\CO^n \,.
$$
Typically $A$ is an elliptic differential operator and $\tau$ 
is some restriction operator to a discrete set with $n$ points. In the case of infinite defect indices typically $\tau$ is the restriction operator along a null subset and $\CO^n$ is replaced by a (fractional order) Sobolev-Hilbert space (see \cite{[Po1]}, \cite{[Po4]} and references therein). \par
By \cite{[Po1]}, \cite{[Po2]} and \cite{[Po4]} one has the following
\begin{theorem}\label{estensioni} 
\par\noindent The set of all self-adjoint extensions of $S$ is parametrized by the bundle $p:\E(\CO^n)\to\P(\CO^n)$; if $A^{\Pi,\Theta}$ denotes the self-adjoint extension corresponding to $(\Pi,\Theta)\in \E(\CO^n)$ 
 then 
\begin{equation*}
A^{\Pi,\Theta}:\D(A^{\Pi,\Theta})\subseteq\H\to\H\,,\quad
A^{\Pi,\Theta}\phi:=A\phi_0\,,
\end{equation*}
\begin{align*}
\D(A^{\Pi,\Theta})
:=\left\{\phi=\phi_0+G_0
\xi_\phi\,,\ \phi_0\in \D(A)\,,\, \xi_\phi\in\CO^n_\Pi
\,,\, \Pi\tau
\phi_0=\Theta\xi_\phi
\right\}\,,
\end{align*}
where 
$$
G_z:\CO^n\to\H\,,\quad G_z:=\left(\tau(-A+\bar z)^{-1}\right)^*\,,\quad z\in\rho(A)\,.
$$
Moreover the resolvent of $A^{\Pi,\Theta}$ is given, for any 
$z\in\rho(A)\cap\rho(A^{\Pi,\Theta})$, by Kre\u\i n's type formula
$$
(-A^{\Pi,\Theta}+z)^{-1}=(-A+z)^{-1}+G_z
\Pi(\Theta+z\Pi G_0^*G_z\Pi)^{-1}\Pi G^*_{\bar z}\,.
$$
\end{theorem}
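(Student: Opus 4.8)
The plan is to prove Theorem \ref{estensioni} along the lines of \cite{[Po1]}, \cite{[Po2]}, \cite{[Po4]}: build everything out of the bounded maps $G_z$ and a matrix-valued Weyl function $\Gamma_z:=z\,G_0^*G_z$, verify a Kre\u\i n-type resolvent identity, and then match the resulting family with von Neumann's classification. First I would record the properties of $G_z$. Since $(-A+\bar z)^{-1}$ is bounded from $\H$ into $(\D(A),\|\cdot\|_A)$ and $\tau$ is bounded there, $G_z=(\tau(-A+\bar z)^{-1})^*\in\mathscr B(\CO^n,\H)$ with $G_z^*=\tau(-A+\bar z)^{-1}$. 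Surjectivity of $\tau$ and of $(-A+\bar z)^{-1}$ onto $\D(A)$ give $G_z$ injective; and for $\phi\in\D(S)=\D(A)\cap\K(\tau)$, $\langle(-A+\bar z)\phi,G_z\xi\rangle_\H=\langle\tau\phi,\xi\rangle_{\CO^n}=0$, so $\R(G_z)\subseteq\K(-S^*+z)$, whence $\R(G_z)=\K(-S^*+z)$ by equality of dimensions ($=n$ for $z\in\rho(A)$, since $n_+=n_-=n$); in particular $\R(G_0)\cap\D(A)=\{0\}$ (if $G_0\xi\in\D(A)$ then $0=-S^*G_0\xi=-AG_0\xi$, so $G_0\xi=0$ as $-A>0$). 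From the first resolvent identity for $A$ I get the linking identities $G_z=(\uno-z(-A+z)^{-1})G_0$, $\tau(-A+z)^{-1}=G_{\bar z}^*$ and $G_z^*G_0=G_0^*G_{\bar z}$, and then for $\Gamma_z=z\,G_0^*G_z=\tau(G_0-G_z)$ (the argument $G_0-G_z=z(-A+z)^{-1}G_0$ lies in $\D(A)$) the relations $\Gamma_z^*=\Gamma_{\bar z}$, $\Gamma_z-\Gamma_w=(z-w)\,G_{\bar w}^*G_z$, and — taking $w=\bar z$ — the Nevanlinna identity $\Gamma_z-\Gamma_z^*=(z-\bar z)\,G_z^*G_z$ with $G_z^*G_z>0$.

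Next I would show that, for each $(\Pi,\Theta)\in\E(\CO^n)$,
\[
R_z:=(-A+z)^{-1}+G_z\,\Pi\,(\Theta+\Pi\Gamma_z\Pi)^{-1}\,\Pi\,G_{\bar z}^*
\]
is the resolvent of a self-adjoint extension of $S$. The Nevanlinna property, together with $\Theta=\Theta^*$ on $\CO^n_\Pi$, makes $\Theta+\Pi\Gamma_z\Pi$ boundedly invertible on $\CO^n_\Pi$ for $\Im z\neq0$ (and, by analytic Fredholm theory plus $-A>0$, for all $z\in\rho(A)$ outside a discrete set), so $R_z$ is well defined; feeding the first-step identities into a block-matrix computation (using $\Gamma_z-\Gamma_w=(z-w)G_{\bar w}^*G_z$, $G_z=(\uno-z(-A+z)^{-1})G_0$ and the resolvent identity for $A$) verifies $R_z-R_w=(w-z)R_zR_w$ and $R_z^*=R_{\bar z}$, while $\K(R_z)=\{0\}$ follows from $\R(G_z)\cap\D(A)=\{0\}$. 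Hence $R_z=(-A^{\Pi,\Theta}+z)^{-1}$ for a unique self-adjoint $A^{\Pi,\Theta}$. To identify it: $\phi=R_z\psi$ means $\phi=(-A+z)^{-1}\psi+G_z\xi$ with $\xi:=\Pi(\Theta+\Pi\Gamma_z\Pi)^{-1}\Pi G_{\bar z}^*\psi\in\CO^n_\Pi$; rewriting $G_z\xi=G_0\xi-z(-A+z)^{-1}G_0\xi$ and absorbing the last term into the $\D(A)$-part gives the $z$-independent splitting $\phi=\phi_0+G_0\xi_\phi$ ($\phi_0\in\D(A)$, unique by $\R(G_0)\cap\D(A)=\{0\}$); then $\tau\phi_0=G_{\bar z}^*\psi-\Gamma_z\xi_\phi$ together with $(\Theta+\Pi\Gamma_z\Pi)\xi_\phi=\Pi G_{\bar z}^*\psi$ yields the boundary condition $\Pi\tau\phi_0=\Theta\xi_\phi$, and $A^{\Pi,\Theta}\phi=z\phi-\psi=A\phi_0$. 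So $A^{\Pi,\Theta}$ has exactly the domain and action in the statement, and the resolvent is the one claimed.

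Finally, to see the parametrization is onto, let $\hat A$ be any self-adjoint extension of $S$. Using $S\subseteq\hat A\subseteq S^*$ and the direct (non-orthogonal) sum $\D(S^*)=\D(A)+\K(-S^*)$ with zero intersection, each $\phi\in\D(\hat A)$ splits uniquely as $\phi_0+G_0\xi_\phi$, $\phi_0\in\D(A)$, $\xi_\phi\in\CO^n$; I set $\CO^n_\Pi:=\{\xi_\phi:\phi\in\D(\hat A)\}$ with its orthogonal projector $\Pi$. The abstract second Green identity
\[
\langle\hat A\phi,\psi\rangle_\H-\langle\phi,\hat A\psi\rangle_\H=\langle\xi_\phi,\tau\psi_0\rangle_{\CO^n}-\langle\tau\phi_0,\xi_\psi\rangle_{\CO^n},\qquad\phi,\psi\in\D(S^*),
\]
(which follows from $S^*\phi=A\phi_0$ and $G_0^*A\phi_0=-\tau\phi_0$) forces, by symmetry of $\hat A$, the assignment $\xi_\phi\mapsto\Pi\tau\phi_0$ to be a well-defined symmetric operator $\Theta$ on $\CO^n_\Pi$, and by maximality to be defined on all of $\CO^n_\Pi$; then $\hat A=A^{\Pi,\Theta}$, and injectivity of $(\Pi,\Theta)\mapsto A^{\Pi,\Theta}$ is read off the domain formula. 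I expect the two genuine obstacles to be: establishing the Nevanlinna/positivity of $\Gamma_z$, without which $\Theta+\Pi\Gamma_z\Pi$ need not be invertible and the resolvent formula is empty; and this last surjectivity step, where the abstract boundary form must be lined up with von Neumann's theory to rule out self-adjoint extensions outside $\{A^{\Pi,\Theta}\}$. The operator-algebraic identities of the first step and the block-matrix check of the resolvent identity, though they encode the entire Kre\u\i n formula, are mechanical.
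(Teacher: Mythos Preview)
Your proposal is correct and follows precisely the route of the references \cite{[Po1]}, \cite{[Po2]}, \cite{[Po4]} that the paper cites; note that the paper itself does not supply a proof of Theorem~\ref{estensioni} but quotes it from those works, so there is no independent ``paper's proof'' to compare against. Your reconstruction---properties of $G_z$ from surjectivity of $\tau$ and the first resolvent identity, the Nevanlinna property of $\Gamma_z=zG_0^*G_z$, the pseudoresolvent verification for $R_z$, the domain identification, and the Green-formula argument for exhaustiveness---is exactly the content of those references and matches the boundary-triple remark following the theorem (formula~\eqref{Green} is your abstract Green identity, and \eqref{Gz} is your $\R(G_z)=\K(-S^*+z)$).
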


\begin{remark}{\rm
One can easily check that the density hypothesis about $\K(\tau)$
gives, for any $z\in \rho(A)$,
\begin{equation*}
\R(G_z)\cap \D(A)=\{0\}\,,
\end{equation*}
thus the decomposition appearing in $\D(A^{\Pi,\Theta})$ is well defined. Moreover, since by first resolvent identity
\begin{equation}\label{GR1}
(z-w)(-A+w)^{-1}G_z=G_w-G_z\,,\\
\end{equation}
one has
\begin{equation*}
\R(G_w-G_z)\subset \D(A)
\end{equation*}
and  
\begin{equation}\label{gamma1}
z\,G_0^* G_{z}=\tau(G_0-G_z)
\end{equation}
}
\end{remark}
\begin{remark} {\rm Notice that the knowledge of 
the adjoint $S^*$ is not required. However it can be
readily calculated: by \cite{[Po3]}, Theorem 3.1, one has 
$$
S^*:\D(S^*)\subseteq\H\to\H\,,\qquad S^*\phi=A\phi_0\,,
$$
$$
\D(S^*)=\{\phi\in\H\,:\,\phi=\phi_0+G_0\xi_{\phi},\  \phi_0\in
\D(A),\ \xi_{\phi}\in\CO^n\}\,.
$$
Moreover
$$
\D(A^{\Pi,\Theta})=
\{\phi\in\D(S^*)\,:\,\rho_0\phi\in\CO^n_\Pi\,,\ 
\Pi\tau_0\phi=\Theta \rho_0\phi\}\,.
$$
where the regularized trace operators $\tau_0$ and $\rho_0$ 
are defined by
$$
\tau_0:\D(S^*)\to\CO^n\,,\qquad\tau_0\phi :=\tau 
\phi_0
$$
and 
$$
\rho_0:\D(S^*)\to\CO^n\,,\qquad\rho_0\phi :=\xi_\phi\,.
$$
By \cite{[Po3]}, Theorem 3.1, $(\CO^n,\tau_0,\rho_0)$ is a boundary
triple for $S^*$, with corresponding Weyl 
function $zG^*_0G_Z$, and the Green-type formula
\begin{equation}\label{Green}
\langle\phi,S^*\psi\rangle-\langle S^*\phi,\psi\rangle
=\tau_0\phi\ccdot\rho_0\psi
-\rho_0\phi\ccdot\tau_0\psi
\end{equation}
holds true. Also notice that $G_z\xi$ solves the boundary value type 
problem
\begin{align}\label{Gz}
\begin{cases}
S^*G_z\xi=zG_z\xi\,,&\\
\rho_0G_z\xi=\xi\,.&
\end{cases}
\end{align}
Here we refer to \cite{[DM1]} for boundary triplets theory.}
\end{remark}
Since $S=A|\K(\tau)=A|\K(\tau_M)$, 
where $\tau_M:=M\tau$ and $M:\CO^n\to\CO^n$ 
is any bijective linear map, the group $\text{GL}(\CO^n)=\{M\,:\,\det(M)\not=0\}$ acts on 
the  bundle $\E(\CO^n)$ by 
$$
\alpha:\text{GL}(\CO^n)\times\E(\CO^n)\to\E(\CO^n)\,,\quad 
\alpha(M,(\Pi,\Theta))=(\Pi_M,\Theta_M)\,,
$$
where $\alpha$ is defined in such a way that 
$$
A_M^{\Pi_M,\Theta_M}=A^{\Pi,\Theta}\,,
$$
with $A_M^{\Pi,\Theta}$ denoting the extension corresponding to 
$(\Pi,\Theta)\in \E(\CO^n)$ provided by Theorem 
\ref{estensioni} in the case one uses the map $\tau_M$. The action $\alpha$ is explicitly given in the following 
\begin{lemma}\label{tauM}  
$$\CO^n_{\Pi_M}=(M^*)^{-1}(\CO^n_{\Pi})\,,\qquad
\Theta_M=\Pi_MM\Theta M^*\Pi_M\,.
$$
\end{lemma}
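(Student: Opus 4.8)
The plan is to establish the lemma by transporting the parametrization of Theorem \ref{estensioni} along the change of map $\tau\mapsto\tau_M:=M\tau$, working directly at the level of operator domains. The first step is the transformation law for the auxiliary operators: since $G_z=(\tau(-A+\bar z)^{-1})^*$ and $(MT)^*=T^*M^*$ for any bounded $T:\H\to\CO^n$, the operator $G_z^M:=(\tau_M(-A+\bar z)^{-1})^*$ associated with $\tau_M$ equals $G_zM^*$. Hence writing $\phi=\phi_0+G_0^M\xi$, with $\phi_0\in\D(A)$ and $\xi\in\CO^n$, as in the $\tau_M$--parametrization is the same as writing $\phi=\phi_0+G_0(M^*\xi)$; comparing with the $\tau$--parametrization and using that $G_0$ is injective with $\R(G_0)\cap\D(A)=\{0\}$ (the Remark following Theorem \ref{estensioni}), the two decompositions of a given $\phi$ share the same $\phi_0$, and their boundary vectors are related by $\xi_\phi=M^*\xi$.

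Next I would pin down $\Pi_M$. For the domain of an extension built from $\tau_M$ to coincide with $\D(A^{\Pi,\Theta})$ it is necessary that the admissible boundary vectors correspond under $\xi\mapsto M^*\xi$, i.e. $M^*(\CO^n_{\Pi_M})=\CO^n_\Pi$, which forces $\CO^n_{\Pi_M}=(M^*)^{-1}(\CO^n_\Pi)$ --- the first claimed identity, with $\Pi_M$ the orthogonal projector onto that subspace. Passing to orthogonal complements (using $(TV)^\perp=(T^*)^{-1}(V^\perp)$ for invertible $T$, here $T=(M^*)^{-1}$) yields the companion identity $(\CO^n_{\Pi_M})^\perp=M\,(\CO^n_\Pi)^\perp$, which I will use below. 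Let $\tilde M:\CO^n_{\Pi_M}\to\CO^n_\Pi$ denote the co-restriction of $M^*$; by the first identity it is a bijection, and a one-line computation in the $\CO^n$ inner product gives $\tilde M^*=\Pi_M M\,\Pi:\CO^n_\Pi\to\CO^n_{\Pi_M}$, in the paper's convention of writing $\Pi$ and $\Pi_M$ also for the relevant injections and surjections. Accordingly the candidate is $\Theta_M:=\tilde M^*\Theta\tilde M=\Pi_M M\,\Theta\,M^*\Pi_M$; congruence by an invertible map preserves symmetry, so $(\Pi_M,\Theta_M)\in\E(\CO^n)$ and, by Theorem \ref{estensioni} applied to $\tau_M$, $A_M^{\Pi_M,\Theta_M}$ is self-adjoint.

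It then remains to verify the inclusion $A^{\Pi,\Theta}\subseteq A_M^{\Pi_M,\Theta_M}$. Given $\phi=\phi_0+G_0\xi_\phi\in\D(A^{\Pi,\Theta})$, so $\xi_\phi\in\CO^n_\Pi$ and $\Pi\tau\phi_0=\Theta\xi_\phi$, I would set $\xi:=\tilde M^{-1}\xi_\phi\in\CO^n_{\Pi_M}$; then by the first step $\phi=\phi_0+G_0^M\xi$ with admissible boundary vector. Writing $\tau\phi_0=\Theta\xi_\phi+v_\perp$ with $v_\perp\in(\CO^n_\Pi)^\perp$, one gets $\tau_M\phi_0=M\tau\phi_0=M\Theta\xi_\phi+Mv_\perp$ and $Mv_\perp\in M(\CO^n_\Pi)^\perp=(\CO^n_{\Pi_M})^\perp$, so applying $\Pi_M$ annihilates the second term and leaves $\Pi_M\tau_M\phi_0=\Pi_M M\Theta\xi_\phi=\tilde M^*\Theta\xi_\phi=\tilde M^*\Theta\tilde M\xi=\Theta_M\xi$, which is exactly the boundary condition defining $\D(A_M^{\Pi_M,\Theta_M})$; moreover $A_M^{\Pi_M,\Theta_M}\phi=A\phi_0=A^{\Pi,\Theta}\phi$. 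Since both $A^{\Pi,\Theta}$ and $A_M^{\Pi_M,\Theta_M}$ are self-adjoint and the former is contained in the latter, they coincide; and by Theorem \ref{estensioni} applied to $\tau_M$ the assignment $(\Pi',\Theta')\mapsto A_M^{\Pi',\Theta'}$ is a bijection onto the self-adjoint extensions of $S$, so $(\Pi_M,\Theta_M)$ is the unique pair with $A_M^{\Pi_M,\Theta_M}=A^{\Pi,\Theta}$, i.e. it is the value $\alpha(M,(\Pi,\Theta))$.

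I expect the only delicate point to be the bookkeeping forced by the two distinct subspaces $\CO^n_\Pi$ and $\CO^n_{\Pi_M}$ together with the paper's single-symbol convention for a projector and its associated injection/surjection --- concretely, getting the adjoint identity $\tilde M^*=\Pi_M M\,\Pi$ and the complement identity $(\CO^n_{\Pi_M})^\perp=M(\CO^n_\Pi)^\perp$ exactly right. Everything else is routine, and the step that saves real work is upgrading the inclusion of domains to equality of operators by self-adjointness, which makes checking the reverse inclusion unnecessary.
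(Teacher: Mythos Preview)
Your proof is correct and follows essentially the same route as the paper's: both arguments rest on $G_0^M=G_0M^*$, the complement identity $(\CO^n_{\Pi_M})^\perp=M(\CO^n_\Pi)^\perp$, and the bijectivity of $\Pi_M M\Pi:\CO^n_\Pi\to\CO^n_{\Pi_M}$. The only cosmetic difference is that the paper checks the ``if and only if'' between the two boundary conditions directly (hence equality of domains in one stroke), whereas you verify one inclusion and then invoke self-adjointness of both operators to upgrade it to equality; this is a harmless variation.
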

\begin{proof} By the definitions of $\tau_M$ and $G_0$ one has that any $\phi\in \D(A_M^{\Pi_M,\Theta_M})$ is of the kind 
$\phi=\phi_0+G_0M^*\zeta$, where $\Pi_M M\tau\phi_0
=\Theta_M\zeta$, $\zeta\in\CO^n_{\Pi_M}$, i.e. 
$\phi=\phi_0+G_0\xi$, where $\Pi_MM\tau\phi_0
=\Theta_M(M^*)^{-1}\xi$, $\xi\in\CO^n_{\Pi}$. 
Since $(\CO^n_{\Pi_M})^\perp=M((\CO^n_\Pi)^\perp)$, 
$\Pi_M M\Pi:\CO^n_\Pi\to\CO^n_{\Pi_M}$ is a bijection. Thus 
$\Pi_MM\tau\phi_0=\Theta_M(M^*)^{-1}\xi$ if and only if
$\Pi\tau\phi_0=(\Pi_M M\Pi)^{-1} \Theta_M(M^*)^{-1}\xi$.
\end{proof} 
We conclude by providing a simple convergence criterion:
\begin{lemma}\label{taun} Given the bounded operators $\tau_N:\D(A)\to\CO^n$ and  $\tau:\D(A)\to\CO^n$, consider the symmetric operators $S_N=A|\K(\tau_N)$ and $S=A|\K(\tau)$. Given $\Pi\in \P(\CO^n)$ and the symmetric operators $\Theta_N:\CO^n_\Pi\to\CO^n_\Pi$ and $\Theta:\CO^n_\Pi\to\CO^n_\Pi$, let $A_N^{\Pi,\Theta_N}$ and $A^{\Pi,\Theta}$ be the self-adjoint extensions of $S_N$ and $S$ given by Theorem \ref{estensioni}. If $\tau_{N}$ norm-converges to $\tau$ and $\Theta_{N}$ converges to $\Theta$ as $N\uparrow\infty$, then 
$A_N^{\Pi,\Theta_N}$ converges in
norm resolvent sense to $A^{\Pi,\Theta}$.
\end{lemma}
\begin{proof}
By our hypothesis on $\tau_N$,
$G_{N,z}^*:= \tau_N (-A+z)^{-1}$ and $G_{N,z}$ 
norm-converge to
$G_z^*$ and $G_z$ respectively. 
This implies that $zG_{N,0}^*G_{N,z}$ norm-converge to 
$zG_{0}^*G_{z}$ and hence $(\Theta_N+z\Pi G_{N,0}^*G_{N,z}\Pi)^{-1}$
norm-converge to 
$(\Theta+z\Pi G_{0}^*G_{z}\Pi)^{-1}$. The thesis then follows by
the resolvent formula provided in Theorem \ref{estensioni}. 
\end{proof}
\end{section}

\section*{References}

\end{document}